\newcommand{\ket}[1]{ | \, #1 \rangle} \newcommand{\bra}[1]{ \langle #1 \, |} 
\newcommand{\proj}[1]{\ket{#1}\bra{#1}} 
\newcommand{\kb}[2]{\ket{#1}\bra{#2}}
\newcommand{\Ab}[1]{ \left| #1 \, \right|} % absolute value
\newcommand{\be}{\begin{equation}} \newcommand{\ee}{\end{equation}}
\newcommand{\ba}{\begin{aligned}} \newcommand{\ea}{\end{aligned}}
\DeclareMathOperator{\Tr}{Tr}
\DeclareRobustCommand\openone{\leavevmode\hbox{\small1\normalsize\kern-.33em1}}%
\newtheorem{dfn}{Definition}
\newtheorem{cor}{Corollary}
\newtheorem{lemma}{Lemma}
\begin{document}

\title{Quantum Strategies for Rendezvous and Domination Tasks on Graphs with Mobile Agents}

\author{Giuseppe Viola} \email{giuseppe.viola.res@gmail.com}
\affiliation{International Centre for Theory of Quantum Technologies, University of Gda\'{n}sk, Wita Stwosza 63, 80-308 Gda\'{n}sk, Poland}

\author{Piotr Mironowicz} \email{piotr.mironowicz@gmail.com}
\affiliation{International Centre for Theory of Quantum Technologies, University of Gda\'{n}sk, Wita Stwosza 63, 80-308 Gda\'{n}sk, Poland}
\affiliation{Department of Physics, Stockholm University, S-10691 Stockholm, Sweden} 
\affiliation{Department of Algorithms and System Modeling, Faculty of Electronics, Telecommunications and Informatics, Gda\'{n}sk University of Technology, Poland}

\date{\today}

\begin{abstract}
	This paper explores the application of quantum non-locality, a renowned and unique phenomenon acknowledged as a valuable resource. Focusing on a novel application, we demonstrate its quantum advantage for mobile agents engaged in specific distributed tasks without communication. The research addresses the significant challenge of rendezvous on graphs and introduces a new distributed task for mobile agents grounded in the graph domination problem. Through an investigation across various graph scenarios, we showcase the quantum advantage. Additionally, we scrutinize deterministic strategies, highlighting their comparatively lower efficiency compared to quantum strategies. The paper concludes with a numerical analysis, providing further insights into our findings.
\end{abstract}

%\keywords{quantum}

\maketitle

\section{Introduction}

Quantum information has emerged as a valuable asset in a multitude of tasks inherent to information and communication technologies. Remarkably, quantum cryptography stands out as an example, where the harnessing of quantum entanglement has paved the way for innovative protocols ensuring secure key exchange and robust randomness certification~\cite{pirandola2020advances}. Despite these achievements, the phenomena of Bell inequalities and non-locality, which defy the boundaries of correlations allowed in classical physics~\cite{horodecki2009quantum,brunner2014bell}, have remained largely unexplored for other pragmatic endeavors. In particular, their potential for distributed
tasks in facilitating the coordination of actions among distributed agents with a shared goal, even in situations with limited or no communication, remains almost untapped~\cite{cao2012overview}, with preliminary trials in~\cite{muhammad2014quantum,mironowicz2023entangled}.

In the realm of quantum information, the accomplishments have been particularly pronounced in quantum cryptography~\cite{pirandola2020advances}, where the manipulation of quantum entanglement has enabled the creation of novel protocols guaranteeing the confidentiality and integrity of exchanged keys, as well as certifying the security of random data~\cite{pironio2010random}. The potential of quantum phenomena achieved with the aid of entanglement has been illustrated by multiple examples of so-called non-local games~\cite{cirel1980quantum,cleve2004consequences,russo2017extended}. A prominent example of them are the XOR-games~\cite{regev2015quantum}, and generalizations thereof~\cite{ramanathan2016generalized,russoAndWatrous2017extended,luo2018nonlocality,luo2019nonlocal}. However, amidst these striking achievements, the captivating domains of Bell inequalities and non-locality~\cite{bell1964einstein}, which intrinsically challenge the limitations imposed by classical physics on correlations, have not been widely applied to other practical tasks. A notable area that remains unexplored is their potential application in orchestrating the collaboration of distributed agents striving towards a collective objective, even in scenarios where communication is restricted or absent, like the rendezvous task~\cite{alpern1995rendezvous}.

This paper delves into the realm of quantum entanglement and its application in coordinating distributed tasks on graphs. The groundwork is laid with two fundamental problems of graph theory: rendezvous and graph domination~\cite{allan1978domination}. These problems serve as the foundation for exploring the potential benefits of quantum entanglement in achieving coordinated actions among distributed agents striving to achieve a common goal on graphs. We continue and extend the investigations initiated in~\cite{mironowicz2023entangled} where for the first time it was shown how to exploit quantum resources for the rendezvous task.

To tackle this topic, we use the semi-definite programming (SDP)~\cite{Skrzypczyk2023,mironowicz2023semi,tavakoli2023semidefinite}, in particular with two prominent methods: the Navascués-Pironio-Acín method (NPA)~\cite{navascues2007bounding,navascues2008convergent} and the see-saw method~\cite{pal2010maximal}. These techniques offer valuable insights into the quantum non-locality, which challenges the boundaries set by classical physics. We explore how quantum entanglement enables distributed agents to achieve rendezvous with no communication, surpassing classical constraints. Then, the paper introduces the domination distributed task for agents. We are this way laying out the framework for incorporating quantum entanglement as a resource for enhanced coordination. To demonstrate the practical implications of the findings, the paper presents a series of numerical calculations and examples involving various graph structures. These numerical simulations showcase the effectiveness of quantum entanglement in scenarios of distributed actions of agents, illustrating the potential advantages of using quantum strategies over classical ones.

The organization of this paper is the following. In sec.~\ref{sec:tasks} we discuss the two relevant problems of graph theory, viz. the task of rendezvous of mobile agents, and the problem of domination number in graphs. We show how to turn the latter problem into a new task for mobile agents. Then we discuss multipartite probability distributions in classical and quantum physics and show how to express distributed tasks as so-called Bell, or non-local, games investigated in quantum theory. In sec.~\ref{sec:methods} we briefly describe the numerical methods used in this paper based on SDP. Next, in sec.~\ref{sec:theorems} we provide analytical results about deterministic strategies that can be used by the agents for the tasks. The results showing the quantum advantage of the tasks are presented in sec.~\ref{sec:Results}. We discuss the obtained results in sec.~\ref{sec:discuss} and conclude in sec.~\ref{sec:conclusions}.

\section{Distributed Tasks for Agents}
\label{sec:tasks}

We will now describe the tasks for mobile agents investigated in this work. The rendezvous task is discussed in sec.~\ref{sec:rendezvousDef}. In sec.~\ref{sec:dominationDef} we cover the domination number of graphs and then propose a new distributed task inspired by it. The examples of graphs considered in this work are given in sec.~\ref{sec:graphs} together with definitions of selected terms used in describing mobile agents. In sec.~\ref{sec:probs} we introduce classical and quantum probability distributions and Bell games. Finally, in sec.~\ref{sec:tasks_as_games} we show how to relate distributed tasks with Bell games.

\subsection{Rendezvous on Graphs}
\label{sec:rendezvousDef}

The rendezvous problem, as formulated by Steve Alpern~\cite{alpern1995rendezvous}, is a mathematical problem that deals with the challenge of two or more mobile agents trying to meet at a specific location without any form of communication. In its simplest form, the rendezvous problem involves two mobile agents, referred to as players or parties, that are initially located in different positions within a known environment. The objective is for both players to reach a common meeting point simultaneously. The key constraint is that the players cannot communicate with each other and have no knowledge of each other’s current position.

To elucidate the concept of the rendezvous problem we refer to the following two motivating examples. The first one is named the telephone problem and is formulated as follows: ``In each of two rooms, there are $n$ telephones randomly strewn about. They are connected in a pairwise fashion by $n$ wires. At discrete times $t = 0, 1, 2, \cdots$ players in each room pick up a phone and say ‘hello.’ They wish to minimize the time $t$ when they first pick up paired phones and can communicate.''~\cite{alpern2002rendezvous}. This problem is the spatial rendezvous on a complete graph, i.e. a graph where an edge exists between every pair of vertices.

The second motivating problem is the Mozart Café Rendezvous Problem: ``Two friends agree to meet for lunch at the Mozart Café in Vienna on the first of January, 2000. However, on arriving at Vienna airport, they are told there are three (or $n$) cafés with that name, no two close enough to visit on the same day. So each day each can go to one of them, hoping to find his friend there.''~\cite{alpern2010rendezvous}.

One variant of the rendezvous problem is the “symmetric rendezvous” where both players have equal capabilities and constraints or are following the same strategy~\cite{anderson1995rendezvous,yu1996agent,alpern1998symmetric,han2008improved}. In this case, the goal is to find a strategy that ensures both players meet at some vertex, regardless of their initial positions. Another variant is the “asymmetric rendezvous,” where the players have different capabilities or constraints and are following the same strategy~\cite{anderson1998asymmetric,alpern1999asymmetric,alpern2000asymmetric,alpern2000pure}. For example, one player may have limited mobility or restricted vision compared to the other player. In this case, the objective is to find a strategy that maximizes the probability of meeting at the rendezvous point while taking into account the differences in capabilities.

The importance of studying the rendezvous problem lies in its relevance to various real-world applications. For instance, in search and rescue missions, autonomous robots or drones may need to coordinate their movements to cover a large area efficiently and meet at a specific location to exchange information or resources. Similarly, in autonomous vehicle systems, vehicles may need to coordinate their routes and timing to avoid collisions and efficiently utilize shared resources like charging stations~\cite{dias2021swarm}.

By understanding and solving the rendezvous problem, researchers can develop efficient strategies for coordinating multiple agents without relying on direct communication. This can lead to improved efficiency, resource utilization, and safety in various domains. More generally, the problem is about coordinating action when the communication is prohibited or severely limited, see e.g.~\cite{pelc2012deterministic} for an overview.

For this work, we will use the following definition:

\begin{dfn}
	Given a graph, and a number $h\ge 1$, the \textbf{rendezvous task} is defined as the task in which $r\geq 2$ agents, placed uniformly randomly among the vertexes of the graph, move along edges of the graph $h$ times. They are successful if, after crossing the edges $h$ times, they are all positioned in the same vertex. They can  establish a strategy before being placed on the graph, but they are not allowed to communicate before each of the agents has completed their move.
 
    If $h = 1$ then the rendezvous task is called \textbf{single-step}.
\end{dfn}

Note that, for the single-step task, to allow strategies for which the agents can end their movement on the same vertex in which they started,  each of the vertexes of the graph must be connected with itself.

\subsection{Domination on Graphs}
\label{sec:dominationDef}

The domination number is a fundamental concept in graph theory~\cite{haynes2013fundamentals,haynes2023domination}. It quantifies the minimum number of vertices needed to control, or dominate, all other vertices in a graph. In essence, it identifies the smallest set of vertices where each vertex either belongs to the set or is adjacent to at least one member of the set. Formally, consider a graph $G = (N, E)$ with a vertex set $N$ and an edge set $E$. A dominating set $D$ is a subset of $N$ such that every vertex in $N$ is either part of $D$ or is adjacent to a vertex in $D$. The domination number, denoted as $\gamma(G)$, is defined as the minimum cardinality of any dominating set $D$.

Various variants of the domination number exist, each imposing specific conditions on dominating sets. Common variants include:
\begin{enumerate}
	\item Total Domination Number~\cite{cockayne1980total,henning2013total}: This requires that every vertex in the graph is adjacent to at least one vertex in the dominating set, ensuring direct control over each vertex.
	\item Independent Domination Number~\cite{Berge1962,ore1962theory,goddard2013independent}: In this variant, the dominating set must also be an independent set, meaning no two vertices in the set are adjacent. This variant seeks dominating sets without redundant control.
	\item Connected Domination Number~\cite{sampathkumar1979connected}: Here, the dominating set must induce a connected subgraph, ensuring an efficient path between any two vertices in the set for comprehensive control over the entire graph.
\end{enumerate}

The domination number and its variants have various applications and usefulness in different domains. For network design, the domination number can be used to optimize network design problems, such as determining the minimum number of sensors or routers required to monitor or control a network effectively. By finding an optimal dominating set, resources can be allocated efficiently while maintaining network connectivity. In the area of facility location, this quantity can help in determining the optimal location of facilities, such as hospitals, fire stations, or surveillance cameras, to ensure maximum coverage and control over a given area. By minimizing the domination number, the cost and resources required for facility placement can be reduced. The domination number can be also applied in social network analysis to identify influential individuals or groups. By finding dominating sets in social networks, it is possible to identify key players who have control or influence over a large portion of the network.

The domination problem til now has been investigated only from the static global point of view, where the entity was deciding which vertices can be used to dominate the graph. In our work, we propose a new distributed approach to the problem and treat it as a dynamic task of a group of mobile agents who start at not known in advance locations, and want to organize themselves without communication so that they dominate as large part of a graph as possible. To be more specific, in this work, we use the following definition:

\begin{dfn}
	Given a graph, and a number $h\ge 1$, the \textbf{domination task}  is defined as the task in which $r\geq 2$ agents, placed uniformly randomly among the vertexes of the graph, move along edges of the graph $h$ times. They try to dominate as many vertexes as they can, counting the dominated points only after all the agents completed their moves.
 
  A vertex is said to be dominated when an agent occupies it or it is connected with an edge to a vertex where there is an agent.
  
     The agents can  establish a strategy before being placed on the graph, but they are not allowed to communicate before each of the agents has completed their move.
 
    If $h = 1$ then the domination task is called \textbf{single-step}.
\end{dfn}

Note that, for the single-step task, to allow strategies for which the agents can end their movement on the same vertex in which they started, each vertex of the graph must be connected with itself.

\subsection{Considered Scenarios and Definitions}
\label{sec:graphs}

In Fig.~\ref{fig:graphs} we present some of the graph scenarios with which we dealt. With the name $n$-gon we refer to the graphs described by a polygon with $n$ vertexes, with the name $n$-line we refer to the graphs described by a line connecting $n$ vertexes, while with the name $n$-line curly we refer to a $n$-line graph where the extrema are connected with themselves.

At exception of the $n$-line curly graphs, for which we provided a specific definition, for all the other graphs we included the word "curly" in their name when each of their vertexes is connected with itself.

\begin{figure*}
	\centering
	\subfigure[Double triangle]{\label{fig:4vertexes_diamond_flat}\includegraphics[width=35mm]{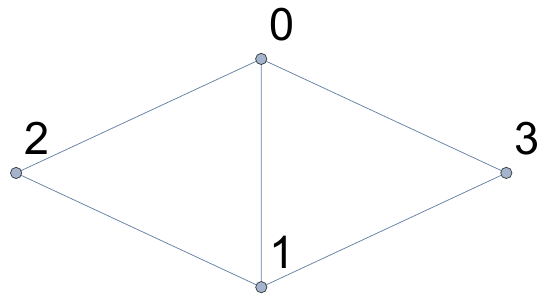}}
	\subfigure[Tetrahedron]{\label{fig:4vertexes_diamond}\includegraphics[width=35mm]{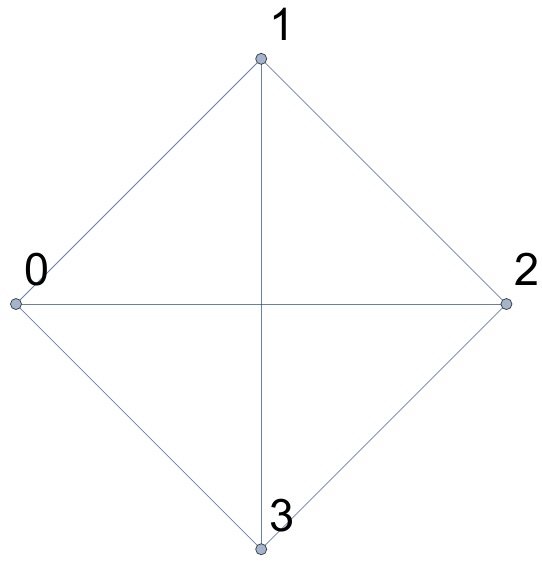}}
	\subfigure[Square curly]{\label{fig:4vertexes}\includegraphics[width=35mm]{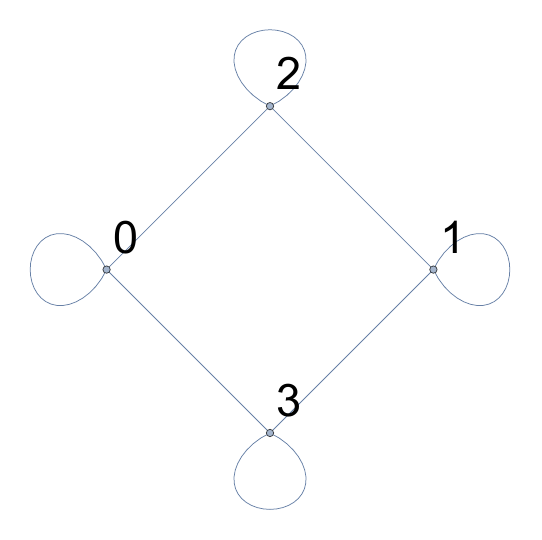}}
	\subfigure[Pentagon curly]{\label{fig:5vertexes}\includegraphics[width=35mm]{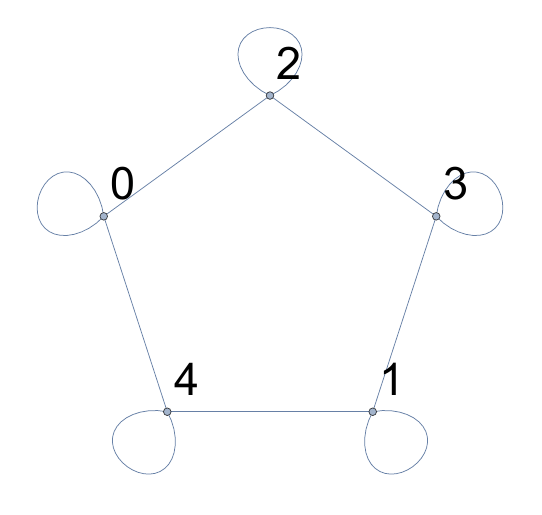}}
	\subfigure[Spike]{\label{fig:5vertexes_spike}\includegraphics[width=35mm]{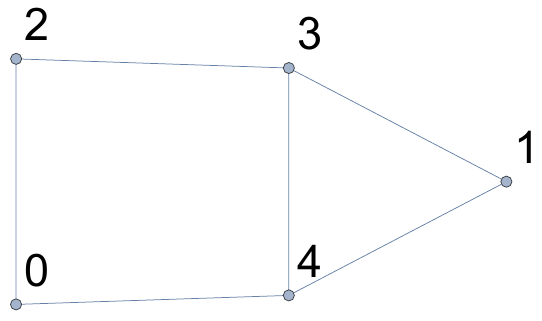}}
	
	\subfigure[Spike curly]{\label{fig:5vertexes1}\includegraphics[width=35mm]{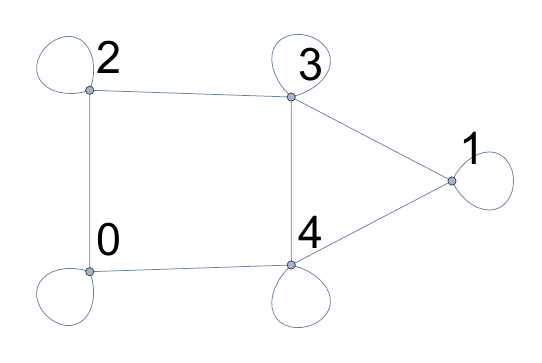}}
	\subfigure[Arrow]{\label{fig:5vertexes_arrow}\includegraphics[width=35mm]{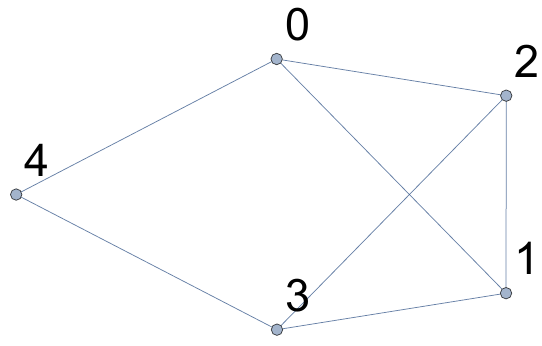}}
	\subfigure[Arrow curly]{\label{fig:5vertexes2}\includegraphics[width=35mm]{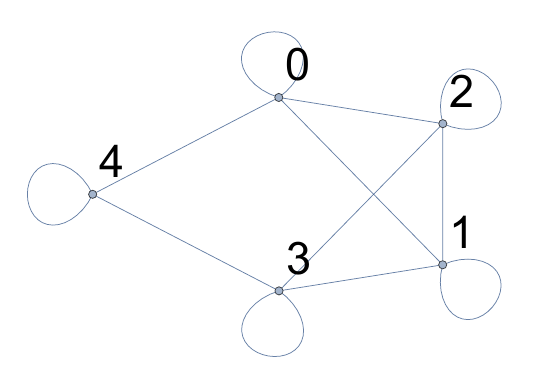}}
	\subfigure[Clamp]{\label{fig:6vertexes_clamp}\includegraphics[width=35mm]{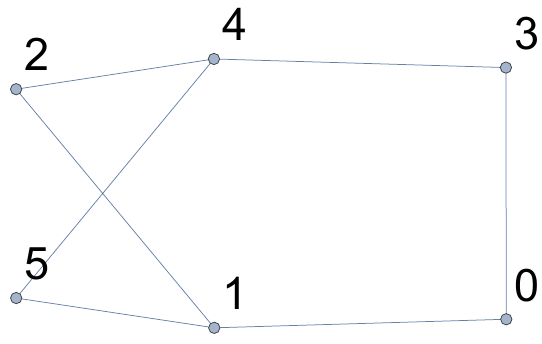}}
	\subfigure[Hat]{\label{fig:6vertexes_hat}\includegraphics[width=35mm]{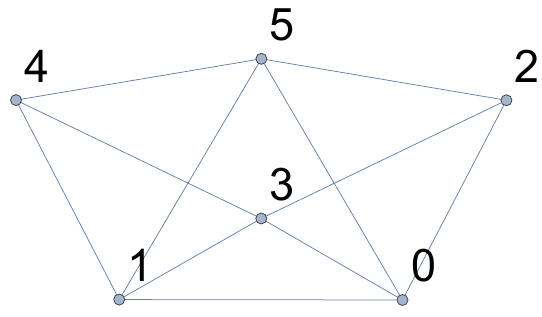}}
	
	\subfigure[House]{\label{fig:6vertexes_house}\includegraphics[width=35mm]{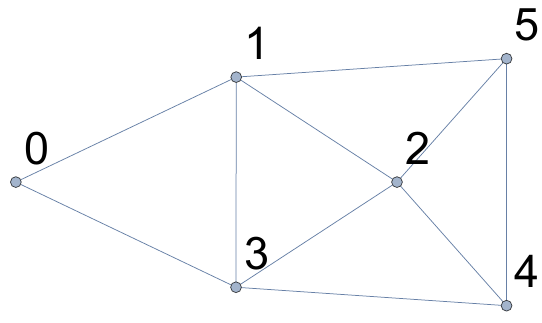}}
	\subfigure[Pyramid, double]{\label{fig:6vertexes_pyramid_double}\includegraphics[width=35mm]{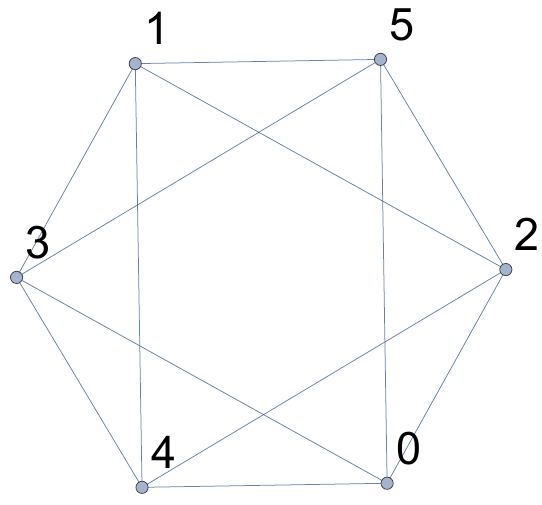}}
	\subfigure[Caltrop]{\label{fig:6vertexes_caltrop}\includegraphics[width=35mm]{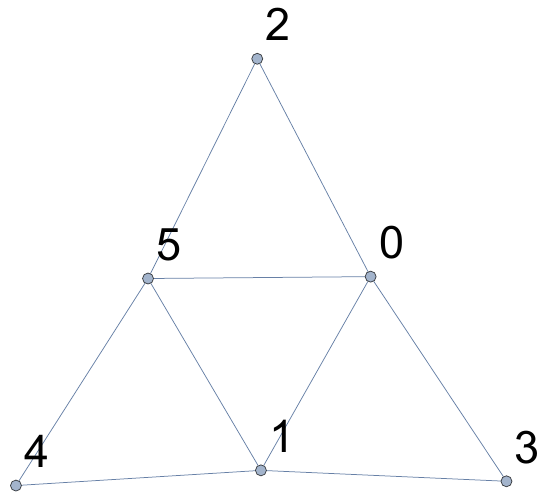}}
	\subfigure[Cube]{\label{fig:8vertexes_cube}\includegraphics[width=35mm]{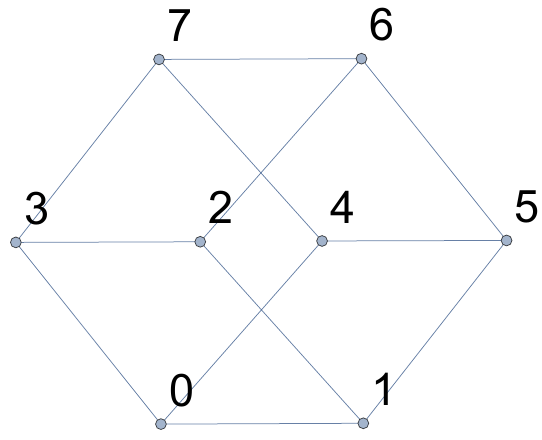}}
	
	\caption{\label{fig:graphs} Some of the graphs analysed in this work.}
\end{figure*}

Let $r$ denote the number of agents performing a given distributed task, let $N$ be the set of all the nodes of the considered graph and $n$ be the number of nodes, $n = \Ab{N}$. Let us introduce the following notation:
\begin{enumerate}
	\item $p_i(a|x)$ denotes the probability that the $i$-th agent will go to the node $a \in N$ when starting from the node $x\in N$;
	\item $S_i:=\{p_i(a|x)\}_{a,x \in N}$ is the classical strategy for the $i$-th agent;
	\item $p_i(a) := \frac{1}{n}\sum_{x\in N} p_i(a|x)$ is the probability to find the $i$-th agent in the node $a\in N$;
	\item $\tilde{S}_i:=\{p_i(a)\}_{a \in N}$.
\end{enumerate}
Note that $\tilde{S}_i$ is uniquely determined by the set $S_i$. A strategy is called deterministic (for all the agents) if and only if all the elements of  $S_i$, $ \forall i\in \{1,...,r\}$, are either zeros or ones. Given the set of strategies adopted by the agents $\{S_i\}_{i\in \{1,...,r\}}$, the strategies $S_i$'s are called \textit{symmetric} if they are all equal among each other.

\subsection{Classical and Quantum Probabilities and Bell Games}
\label{sec:probs}

The classical joint probability distribution $P(a,b|x,y)$ for obtaining outcomes $a$ and $b$ in response to inputs $x$ and $y$ is expressed as a sum over a hidden variable $\lambda$, viz.
\begin{equation}
	\label{eq:pabxy_lhv}
	P(a,b|x,y) = \sum_{\lambda} P(\lambda) \cdot P_{A|X,\Lambda}(a|x,\lambda) \cdot P_{B|Y,\Lambda}(b|y,\lambda).
\end{equation}
Each term in the sum corresponds to a specific hidden state $\lambda$ and is weighted by the probability $P(\lambda)$ of observing that hidden state. The conditional probabilities $P_{A|X,\Lambda}(a|x,\lambda)$ and $P_{B|Y,\Lambda}(b|y,\lambda)$ represent the likelihood of obtaining outcomes $a$ and $b$ for Alice and Bob, respectively, given their inputs $x$ and $y$ and the hidden state $\lambda$. The probability distribution $P_{\Lambda}(\lambda)$ characterizes the likelihood of different hidden internal states. The normalization condition $\sum_{\lambda \in \Lambda} P_{\Lambda}(\lambda) = 1$ ensures that the probabilities are properly scaled. This formulation captures the classical statistical description of the behavior of the devices and is often called a local hidden variables (LHV) model, which assumes that the measurement outcomes are determined by pre-existing properties of the system that are independent of the measurement settings

A bipartite quantum probability distribution of joint measurement results refers to the statistical distribution of outcomes obtained from a joint measurement performed on a bipartite quantum system. The bipartite system is described by a density operator $\rho$. The joint measurement performed by Alice and Bob is typically represented by a set of positive operator-valued measures (POVMs), denoted as ${M(a,x)}$ ($N(b,y)$), where $a$ ($b$) and $x$ ($y$) represent the measurement outcomes and settings of Alice (Bob). These POVMs satisfy the completeness relation $\sum_a M(a,x) = \openone$ ($\sum_b N(b,y) = \openone$), where $\openone$ is the identity operator~\cite{nielsen2010quantum}.

The bipartite quantum probability distribution $P(a,b|x,y)$ gives the probability of obtaining measurement outcomes $a$ and $b$ when Alice and Bob use measurement settings $x$ and $y$, respectively. It is calculated using the Born rule, which states that the probability is given by:
\begin{equation}
	\label{eq:pabxy_q}
	P(a,b|x,y) = \Tr [\rho M(a,x) \otimes N(b,y)].
\end{equation}
Here, $\Tr [\rho M(a,x) \otimes N(b,y)]$ represents the trace of the product of the density operator $\rho$ and the corresponding POVM elements $M(a,x)$ and $N(b,y)$. The bipartite quantum probability distribution captures the correlations between Alice’s and Bob’s measurement outcomes. These correlations can exhibit various phenomena such as entanglement, non-locality, or classical correlations depending on the quantum state $\rho$. We denote the set of all possible quantum bipartite probability distributions by $\mathcal{Q}$.

A bipartite game is a linear functional over probability distribution of a form
\begin{equation}
	\label{eq:bipartiteGame}
	\mathcal{B}[P(a,b|x,y)] = \sum_{a,b,x,y} \beta_{a,b,x,y} P(a,b|x,y)
\end{equation}
for some coefficients $\{\beta_{a,b,x,y}\}$. Eq.~\eqref{eq:bipartiteGame} can be easily generalized to more parties.

It is worth noting that in some cases, the bipartite quantum probability distribution may be incompatible with classical theories. This means that it cannot be explained solely by LHV models. If this is the case for a given game,                   then such a game is called a Bell game or non-local game~\cite{horodecki2009quantum,brunner2014bell}. Its maximal value over classical distributions  is called the classical bound and denoted $C$, and its maximal value over quantum distributions is called the Tsirelson bound and denoted by $Q$~\cite{cirel1980quantum}, $Q > C$.
% Eq.~\eqref{eq:bipartiteGame}

% Eq.~\eqref{eq:pabxy_lhv} 

% Eq.~\eqref{eq:pabxy_q}

In the case in which the agents participating in the domination or rendezvous tasks have access to quantum resources, they share a chosen quantum state before being placed on the graph. Once on the graph, they can measure their own state depending on the node they are at and choose which node to move to based on the outcome of their local measurement.

\subsection{Distributed Tasks as Bell Games}
\label{sec:tasks_as_games}

The success probability of a distributed task involving two agents, Alice and Bob, can be expressed as a linear functional of joint probabilities $P(a,b|x,y)$, where successes are assigned a coefficient of $1$ and failures are assigned a coefficient of $0$. This approach allows us to quantify the likelihood of success based on the joint probabilities of the agents’ actions and observations.

To understand this concept better, let’s break down the notation used. In this point of view the probabilities $P(a,b|x,y)$ as defined in sec.~\ref{sec:probs} and represent the joint probability distribution of Alice’s action $a$, Bob’s action $b$, Alice’s observation $x$, and Bob’s observation $y$. Thus, it captures the probabilistic relationship between the actions and observations of both agents. We note that the observation is performed depending on the agent's local situation, like information about its position in a graph, and the action depends on the measurement result.

Indeed, in a distributed task, Alice and Bob may need to coordinate their actions or make decisions based on their individual observations. The success of the task depends on how well they align their actions and observations to achieve the desired outcome. By considering the joint probabilities $P(a,b|x,y)$, we can analyze the likelihood of success in different scenarios. To express the success probability as a linear functional, we assign coefficients to each possible outcome (success or failure) based on its desirability. In this case, successes are assigned a coefficient of $1$, indicating their importance in determining overall success. Failures, on the other hand, are assigned a coefficient of $0$ since they do not contribute to the success of the task. Let $S(a,b,x,y)$ be a relevant scoring function of a distributed task. Using the coefficients for all possible outcomes and their corresponding joint probabilities, we obtain a linear combination that represents the score as:
\begin{equation}
	\label{eq:pabxy_game_task}
	\sum_{a,b,x,y} S(a,b,x,y) \times P(a,b|x,y),
\end{equation}
which is in the form of a bipartite game. This approach allows us to quantify the likelihood of success in distributed tasks by considering the joint probabilities and assigning appropriate coefficients to different outcomes. By manipulating these coefficients, we can prioritize certain outcomes or adjust the importance of different actions and observations in determining success or normalize events to interpret the score as a success probability of a task.

%Eq.~\eqref{eq:bipartiteGame}

% Eq.~\eqref{eq:pabxy_game_task} 

This approach allows us to analyze the likelihood of success based on the agents’ actions and observations and compare the capabilities possible in classical  and quantum  probability distributions. We say that we have a quantum advantage for a given task if the score is a Bell game with $Q > C$.

%Eq.~\eqref{eq:pabxy_lhv}

%Eq.~\eqref{eq:pabxy_q}

% Eq.~\eqref{eq:pabxy_game_task} 

\section{Numerical Optimization Methods}
\label{sec:methods}

In this section, we describe the SDP method in sec.~\ref{sec:SDP}. This optimization technique is used to formulate two tools allowing for calculations of the success measure in distributed tasks, viz. NPA and see-saw. The former, discussed in sec.~\ref{sec:NPA}, provides upper, and the latter, discussed in sec.~\ref{sec:seesaw}, lower bounds on successes, respectively.

\subsection{Semi-definite Programming}
\label{sec:SDP}

SDP is a mathematical optimization technique that extends the concepts of linear programming to handle matrices and, in particular, positive semidefinite matrices. It addresses problems where the goal is to optimize a linear objective function subject to linear equality and semidefinite inequality constraints. SDP finds applications in diverse fields~\cite{vandenberghe1996semidefinite}, including quantum information~\cite{Skrzypczyk2023,mironowicz2023semi,tavakoli2023semidefinite}. Advancements in algorithmic design and improvements in computational power have significantly enhanced the efficiency of SDP solvers. Modern interior-point methods~\cite{potra2000interior} have proven to be effective in solving large-scale SDP, making them a versatile tool for addressing complex optimization problems~\cite{andersen2000mosek}.  For $m, n \in \mathbb{N}_{+}$ the primal optimization task of SDP is:
\begin{align}
	\begin{split}
		\text{minimize } &\null c^{T} \cdot x \\
		\text{subject to } &\null F(x) \succeq 0, \\
	\end{split}
\end{align}
where $c \in \mathbb{R}^m$, $F(x) := F_0 + \sum_{i = 1}^{m} x_i F_i$, $F_i \in \mathbb{R}^{n \times n}$, and $x \in \mathbb{R}^m$ is the variable. The so-called dual of this problem with a symmetric matrix variable $Z \in \mathbb{R}^{n \times n}$ is
\begin{align}
	\begin{split}
		\text{maximize } &\null - \Tr \left[ F_0 Z \right] \\
		\text{subject to } &\null \Tr \left[ F_i Z \right] = c_i, \text{ for } i = 1, \cdots, m,\\
		&\null Z \succeq 0.
	\end{split}
\end{align}

\subsection{The Navascu\'es-Pironio-Ac\'{\i}n Method}
\label{sec:NPA}

The NPA method, introduced by Navascués et al. in 2007~\cite{navascues2007bounding}, is a mathematical framework used to study quantum correlations and entanglement in quantum information theory. It provides a systematic approach to analyze the behavior of quantum systems and their correlations. Recall that in quantum mechanics, entanglement refers to the phenomenon where two or more particles become correlated in such a way that their states cannot be described independently. These correlations are stronger than any classical (or local) correlations and play a crucial role in various quantum information processing tasks.

This technique aims to quantify and characterize these quantum correlations by constructing a hierarchy of SDPs. The hierarchy is defined by introducing a sequence of operators that capture the correlations between systems under consideration and are associated with different levels of the hierarchy. Moving up the hierarchy, at each level, new sequences of operators are introduced that constrain or restrict additional correlations beyond those captured at lower levels. The NPA method provides a systematic way to compute these moment operators at each level and study the properties of quantum correlations rigorously. In this work, we employ the so-called almost quantum~\cite{Navascues2015a} level $1+ab$, and more precise, but also more computationally demanding, level $2$.

Let us now explain NPA in more detail. As we mentioned, quantum probability distributions $P(a,b|x,y)$, see Eq.~\eqref{eq:pabxy_q}, are challenging to characterize due to their inherent complexity. Recall that they belong to the set $\mathcal{Q}$ if certain conditions are met, namely, if there exists a Hilbert space $\mathcal{H}$, a state (vector) $\ket{\psi}$, and a set of operators (measurements) $\{E^a_x, E^b_y\}_{a,b,x,y}$ satisfying specific criteria. These criteria include Hermitian properties of operators, orthogonality of different measurement outcomes, normalization conditions, and commutativity of measurements between Alice and Bob. The probability distribution is then expressed as an inner product involving measurements of the quantum states. However, characterizing $\mathcal{Q}$ without quantum formalism poses a challenge. The NPA method provides a solution by defining a hierarchy $\{\mathcal{Q}_k\}_{k=1}^{\infty}$ of SDP problems. This hierarchy offers increasingly accurate approximations of $\mathcal{Q}$, with higher levels yielding more precise solutions, albeit at the expense of increased computational complexity. Ultimately, the hierarchy converges to the quantum set $\mathcal{Q}$~\cite{navascues2008convergent}, reflecting the effectiveness of the NPA method in characterizing quantum correlations.

A sequence of operators is formed by concatenating projective measurement operators. For instance, $E^1_2 E^3_2 F^2_1 E^1_1$ represents a sequence of four operators. Exploiting the commutativity of Alice's operators $E^a_x$ with Bob's operators $F^b_y$, we can rearrange the sequence as $E^1_2 E^3_2 E^1_1 F^2_1$. Given that $E^a_x E^{a^{\prime}}_x = 0$ and $F^b_y F^{b^{\prime}}_y = 0$ for $a \neq a^{\prime}$ and $b \neq b^{\prime}$, and leveraging the commutation property, we obtain, for instance, $E^2_1 F^3_3 E^1_1 = E^2_1 E^1_1 F^3_3 = 0$ since $E^2_1$ and $E^1_1$ are orthogonal. Moreover, as $E^a_x$ are projectors, $(E^a_x)^k = E^a_x$ for any $k \geq 1$, and similarly for $F^b_y$~\cite{mironowicz2018applications,mironowicz2023semi}.

The length of a sequence of operators $S$ denotes the minimal number of projectors required to express it. The null sequence corresponds to the identity operator, denoted by $\openone$, with its length defined as $0$. Consider an $n$-element set $\mathcal{S}$ of sequences of operators, such as $\mathcal{S}_{1+ab} = \left\{ \openone, E^a_x, F^b_y, E^a_x F^b_y \right\}_{\substack{a \in A, b \in B \\ x \in X, y \in Y}}$. By employing the NPA method, one constructs a hierarchy of relaxations using different choices of the set of sequences $\mathcal{S}$. Specifically, a set $\mathcal{S}_k$ comprises all sequences of operators $\{E^a_x, E^b_y\}$ with a length at most $k$. It follows that $\mathcal{S}_{1+ab} = \mathcal{S}_1 \cup \{ E^a_x E^b_y \}$, where $\mathcal{S}_{1+ab}$ is defined accordingly.

The NPA method aims to characterize a given bipartite probability distribution as quantum, implying the existence of a realization with a state $\ket{\psi}$ and projective measurements $\{ E^a_x, F^b_y \}$. This realization satisfies, for all settings $x \in X$ and $y \in Y$ and outcomes $a \in A$ and $b \in B$, the equation $P(a,b|x,y) = \bra{\psi} E^a_x F^b_y \ket{\psi}$. Considering operators $O_i, O_j$ in the set $\mathcal{S}$, let $\Gamma_{O_i, O_j}$ be defined as $\bra{\psi} O_i^{\dagger} O_j \ket{\psi}$. Consequently, $\Gamma_{E^a_x, E^b_y} = P(a,b|x,y)$ and $\Gamma_{\openone, \openone} = 1$. This equation defines an $n \times n$ matrix, where rows and columns are indexed by elements of $\mathcal{S}$, forming the so-called moment matrix $\Gamma$.

The elements of $\Gamma$ satisfy linear constraints: $\Gamma_{i,j} = \Gamma_{k,l}$ if and only if $O_i^{\dagger} O_j = O_k^{\dagger} O_l$, and $O_i^{\dagger} O_j = 0$ implies $\Gamma_{i,j} = 0$. For $v \in \mathbb{C}^n$ and $V = \sum_j v_j O_j$, it follows that $v^\dagger \Gamma v = \sum_{i,j} v_i^{*} \Gamma_{i,j} v_j = \sum_{i,j} v_i^{*} \bra{\psi} O_i^{\dagger} O_j \ket{\psi} v_j = \bra{\psi} V^{\dagger} V \ket{\psi} = \left| \langle V \ket{\psi} \right|^2 \geq 0$, thereby ensuring $\Gamma \succeq 0$. We obtain the relaxation by requiring the existence of such $\Gamma$ instead of the existence of states and operators realizing $P(a,b|x,y)$. This method is particularly useful for analyzing complex quantum systems and can provide valuable insights into the nature of quantum correlations and phenomena. The results obtained from NPA, as it constitutes a relaxation, provide an upper bound on the exact solution of the quantum optimization problem.

\subsection{The See-Saw Method}
\label{sec:seesaw}

An alternative approach to optimization over quantum distributions $\mathcal{Q}$, when we have certain restrictions on the dimensions of operators included in the setup, is the so-called see-saw~\cite{pal2010maximal} method. Again, we consider bipartite quantum probabilities given by expressions of the form of Eq.~\eqref{eq:pabxy_q}. It is easy to see that expressions that are linear combinations of probabilities given by formulas of the form of Eq.~\eqref{eq:bipartiteGame}, such as Bell-type operators, are linear in each term, \textit{viz.} the quantum state, Alice's measurement operators, and Bob's measurement operators. However, since the optimization must be done over all these groups of variables, the whole expression is non-linear.

The key idea of the see-saw method is the alternating use of a series of optimizations for which two of the expressions constituting Eq.~\eqref{eq:pabxy_q} are treated as constants, and the third of them is a variable subjected to optimization by SDP methods with the objective function being the considered Bell operator. For this purpose, in the first iteration, two of the terms take fixed, often randomly chosen, values.

To be more specific, see-saw allows to maximize a given Bell expression, which serves as a score function       in the considered bipartite games. This Bell expression, denoted as $\mathcal{B}[P(a,b|x,y)]$, quantifies the correlations between measurement outcomes $a$ and $b$ for settings $x$ and $y$. It is computed as a weighted sum of bipartite probabilities $P(a,b|x,y)$, as defined in Eq. ~\eqref{eq:pabxy_q}, where the coefficients $\beta_{a,b,x,y}$ weight the contributions of each outcome to the overall score.
% ~\eqref{eq:bipartiteGame}
%~\eqref{eq:pabxy_game_task}

At each iteration of the see-saw method, the quantum state and measurements of both Alice and Bob are optimized to maximize the Bell expression. This involves three main optimization steps: first, the state $\rho$ is optimized while holding the measurements of Alice and Bob constant; next, all Alice's measurements $M(a,x)$ are optimized while keeping the state and Bob's measurements fixed; finally, Bob's measurements $N(b,y)$ are optimized while the state and Alice's measurements are held constant.

During each iteration, the value of the Bell expression is computed using the updated quantum state and measurements. If the improvement in the Bell expression value falls below a predefined threshold or the maximum number of iterations is reached, the optimization process terminates. Through this iterative procedure, the see-saw method efficiently explores the space of possible quantum correlations, even though is not guaranteed to be converging towards the optimal solution that globally maximizes the Bell expression. Nonetheless, by providing explicit quantum state and measurements, it sheds light on the quantum nature of the underlying system. Indeed, note that while the NPA method provides an approximation of the quantum set of probabilities from its exterior, meaning a relaxation, the see-saw method finds direct representations of the quantum state and measurements implementing the found distribution. Thus, any solution obtained by the see-saw method is lower bound on the exact solution of the quantum optimization problem, complementing the results from NPA.

\section{Properties of Symmetric Deterministic Strategies for Distributed Tasks}
\label{sec:theorems}

In sec.~\ref{sec:symDetRendez} we prove lemmas showing that deterministic symmetric strategies are sufficient for consideration of LHV models for the rendezvous task, and then in sec.~\ref{sec:detDomination} we prove an analogous result for the domination task.

\subsection{Symmetric Deterministic Strategies for Rendezvous on Graphs}
\label{sec:symDetRendez}

Given any set of strategies $\{S_i\}_{i\in \{1,...,r\}}$ we have the success probability $W(\{S_i\}_{i\in \{1,...,r\}})$ is given by
\begin{equation}
	W(\{S_i\}_{i\in \{1,...,r\}}) = \sum_{a\in N} \prod_{i=1} ^r  p_i (a)=\tilde{W}(\{\tilde{S}_i\}_{i\in \{1,...,r\}})
\end{equation}
so the success probability is completely determined by the set $\{\tilde{S}_i\}_{i\in \{1,...,r\}}$.\\

\begin{lemma}
	\label{lem:symNonsym}
	Symmetric strategies for the rendezvous task are at least as good as non-symmetric strategies.
\end{lemma}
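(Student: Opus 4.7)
The plan is to symmetrize an arbitrary set of strategies by averaging, and then show via the AM--GM inequality that the resulting symmetric strategy performs at least as well as the original one.

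Concretely, starting from an arbitrary family of strategies $\{S_i\}_{i\in\{1,\ldots,r\}}$ with $S_i=\{p_i(a|x)\}_{a,x\in N}$, I would define the averaged strategy
\begin{equation}
\bar{S}=\{\bar{p}(a|x)\}_{a,x\in N},\qquad \bar{p}(a|x):=\frac{1}{r}\sum_{i=1}^{r} p_i(a|x),
\end{equation}
and let every agent adopt $\bar{S}$. A direct computation shows that the induced marginals satisfy
\begin{equation}
\bar{p}(a)=\frac{1}{n}\sum_{x\in N}\bar{p}(a|x)=\frac{1}{r}\sum_{i=1}^{r}p_i(a),
\end{equation}
so the marginal of the symmetrized strategy is simply the arithmetic mean of the original marginals.

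The next step is to compare the success probabilities. Since $W$ depends only on $\{\tilde{S}_i\}$, I would apply the AM--GM inequality pointwise in $a\in N$ to the nonnegative numbers $p_1(a),\ldots,p_r(a)$:
\begin{equation}
\prod_{i=1}^{r} p_i(a)\;\leq\;\left(\frac{1}{r}\sum_{i=1}^{r}p_i(a)\right)^{r}=\bar{p}(a)^{r}.
\end{equation}
Summing over $a\in N$ and using the formula for $W$ stated just before the lemma gives
\begin{equation}
W(\{S_i\}_{i})=\sum_{a\in N}\prod_{i=1}^{r}p_i(a)\;\leq\;\sum_{a\in N}\bar{p}(a)^{r}=W(\underbrace{\bar{S},\ldots,\bar{S}}_{r\text{ times}}),
\end{equation}
which is exactly the claim, since $(\bar{S},\ldots,\bar{S})$ is a symmetric strategy.

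I do not anticipate a real obstacle: the statement reduces to a single application of AM--GM once one notices that averaging the conditional distributions produces the arithmetic mean of the marginals. The only point requiring mild care is to check that $\bar{p}(a|x)$ is itself a valid probability distribution over $a$ for each $x$ (which follows from convexity of the set of distributions) and that the reduction to marginals provided just above the lemma genuinely applies, so that changing $S_i\mapsto\bar{S}$ without altering anything else is a legitimate move.
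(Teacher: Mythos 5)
Your proof is correct, but it takes a genuinely different route from the paper's. The paper symmetrizes by \emph{selection}: it picks the single best strategy $S_{i^*}$ among the given ones and has every agent adopt it, relying on the inequality $\sum_{a}\prod_{i}p_i(a)\le\max_i\sum_a p_i(a)^r$ (which the paper asserts without justification; it follows from the generalized H\"older inequality together with the fact that a geometric mean of the quantities $\sum_a p_i(a)^r$ is at most their maximum). You symmetrize by \emph{averaging} the conditional distributions and then apply AM--GM pointwise in $a$, which is more elementary and fully self-contained --- every step is a one-line computation, and you correctly note that the average of valid strategies is a valid strategy (it remains supported on admissible moves from each $x$, since each $p_i(\cdot|x)$ is). The one trade-off worth noting: the paper's symmetric strategy inherits determinism from the originals, which dovetails neatly with Corollary~\ref{cor:detSymSufficient}, whereas your averaged strategy is generically non-deterministic even when the inputs are deterministic; this costs nothing here because Lemma~\ref{lem:symDeterministic} de-randomizes symmetric strategies afterward anyway, but it is the reason the paper's choice of symmetrization is slightly more convenient for the overall argument.
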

\begin{proof}
	We have that
	\begin{equation}
		\begin{aligned}
			\tilde{W}(\{\tilde{S}_i\}_{i\in \{1,...,r\}}) &= \sum_{a\in N} \prod_{i=1} ^r  p_i (a) \leq \\
            & \prod_{i=1} ^r \sqrt[r]{\sum_{a\in N} p_i(a)^{r} }    \leq\\
			& \max_{i \in {1,...,r}} \sum_{a\in N} p_i(a)^{r} = \tilde{W}(\{\tilde{S}_{i^*},...,\tilde{S}_{i^*}\}),
		\end{aligned}
	\end{equation}
	where $i^*$ is the value of $i$ for which the maximum is reached. Here for the first estimation we used the Hölder inequality for sums~\cite{holder1889uber}.
	
	So, given any set of strategies adopted by the $r$ agents, the winning probability is upper bounded by the success probability of the symmetric strategy which employs a suitable strategy chosen among the set of the given strategies.
\end{proof}

\begin{lemma}
	\label{lem:symDeterministic}
	Symmetric deterministic strategies for the rendezvous task are at least as good as symmetric non-deterministic strategies.
\end{lemma}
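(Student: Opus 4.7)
The plan is to exploit convexity of the map $t \mapsto t^r$ together with the fact that any probabilistic strategy can be written as a convex combination of deterministic ones. Concretely, take any symmetric (possibly non-deterministic) strategy in which every agent uses the same strategy $S$, giving $p_i(a) = p(a)$ for all $i$. By the identity stated just before Lemma~\ref{lem:symNonsym}, the success probability reduces to $\sum_{a\in N} p(a)^r$.

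Next, I would write $S$ as a convex combination of deterministic single-agent strategies $\{S^{\lambda}\}$ with weights $q(\lambda) \geq 0$, $\sum_\lambda q(\lambda) = 1$, so that $p(a|x) = \sum_\lambda q(\lambda)\, p^\lambda(a|x)$ with $p^\lambda(a|x) \in \{0,1\}$. Averaging over starting vertices gives $p(a) = \sum_\lambda q(\lambda)\, p^\lambda(a)$, with $p^\lambda(a) := \frac{1}{n}\sum_x p^\lambda(a|x)$. For each $\lambda$ the symmetric deterministic strategy in which every agent uses $S^\lambda$ achieves success probability $\sum_{a\in N} (p^\lambda(a))^r$.

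Now I would apply Jensen's inequality to the convex function $t \mapsto t^r$ (valid since $r\ge 2$): for each $a$,
\begin{equation}
p(a)^r = \left(\sum_\lambda q(\lambda)\, p^\lambda(a)\right)^{\!r} \leq \sum_\lambda q(\lambda)\, (p^\lambda(a))^r.
\end{equation}
Summing over $a \in N$ and exchanging the two sums yields
\begin{equation}
\sum_{a\in N} p(a)^r \leq \sum_\lambda q(\lambda) \sum_{a\in N} (p^\lambda(a))^r \leq \max_\lambda \sum_{a\in N} (p^\lambda(a))^r,
\end{equation}
which is the success probability of the symmetric deterministic strategy in which all agents use $S^{\lambda^*}$, for $\lambda^*$ attaining the maximum.

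There is really no obstacle: the only care needed is to keep the symmetry intact when performing the decomposition, i.e., all agents must be replaced simultaneously by the \emph{same} deterministic $S^{\lambda^*}$, which is exactly what the bound above delivers. The argument closely parallels that of Lemma~\ref{lem:symNonsym}, with Jensen's inequality playing the role that the max-over-$i$ bound played there.
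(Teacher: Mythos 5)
Your proof is correct, but it takes a genuinely different route from the paper's. The paper proceeds by an iterative local perturbation: if the shared strategy is not deterministic, it finds an input $x_0$ with two outputs $a_1, a_2$ of positive conditional probability, shifts all the mass at $x_0$ from the output with the smaller marginal onto the one with the larger marginal, and justifies the improvement via the elementary inequality $p(b)^r + p(c)^r < (p(b)+\epsilon)^r + (p(c)-\epsilon)^r$ for $p(b)\ge p(c)$ and $0<\epsilon\le p(c)$; repeating this finitely many times yields a deterministic strategy that is at least as good. Your argument instead decomposes the stochastic strategy globally into a convex combination of deterministic ones and applies Jensen's inequality to $t\mapsto t^r$ in one shot, then picks the best deterministic component. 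Both arguments rest on the same convexity of $t\mapsto t^r$, but yours is the standard ``extreme points of the polytope of stochastic maps'' template: it is shorter, avoids the (implicit) termination argument of the iterative procedure, and makes transparent that the conclusion holds for any objective that is a convex function of the marginal $\tilde{S}$. The paper's version, in exchange, is constructive in a greedy sense and exhibits a strict improvement at each rounding step. One point worth making explicit in your write-up is the (easy) fact that the convex decomposition can be chosen so that each deterministic component respects the adjacency constraint of the graph; this holds because each deterministic $S^\lambda$ only ever selects outputs lying in the support of $p(\cdot|x)$, which is contained in the neighbourhood of $x$.
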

\begin{proof}
	Given any non-deterministic strategy $S =\{p (a|x)\}_{a,x\in N}$, repeated for each of the agents, and the associated set $\tilde{S}=\{p(a)\}_{a\in N}$, we have that it is always possible to build a deterministic strategy, repeated for each of the agents, which has a greater success probability. In fact, if $S_1$ is not deterministic, there is at least one input node $x_0$ for which there are at least two nodes $a_1$ and $a_2$ for which $p(a_1|x_0), p(a_2|x_0)> 0$.
	
	Without loss of generality, we consider the case in which $p(a_1)\geq p(a_2)$. A better strategy $S' = \{p' (a|x)\}_{a,x\in N} $, repeated for each of the agents, is built by assigning to the node $x_0$ a probability $p'(a_1|x_0) = p(a_1|x_0)+p(a_2|x_0)$ and $p'(a_2|x_0) = 0$, and the same probabilities in all the other cases. This is true because, for any $b,c \in N$, $r \geq 2$ with $p(b)\geq p(c)$ and $0 <\epsilon \leq p(c)$, we have that
	\begin{equation}
		p(b)^r + p(c)^r < (p(b)+\epsilon)^r + (p(c)-\epsilon)^r.
	\end{equation}
	Then, it is possible to repeat this procedure till you obtain a deterministic strategy.
	
	So any non-deterministic symmetric strategy has a lower success probability of at least one symmetric deterministic strategy.
\end{proof}

Putting together the previous two results from Lemmas~\ref{lem:symNonsym} and~\ref{lem:symDeterministic} we get the Corrolary~\ref{cor:detSymSufficient}.

\begin{cor}
	\label{cor:detSymSufficient}
	It is possible to find the optimal strategy for the rendezvous problem by investigating only deterministic symmetric strategies. 
\end{cor}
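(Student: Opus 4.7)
The plan is to obtain the corollary as an essentially immediate two-step reduction, chaining Lemma~\ref{lem:symNonsym} and Lemma~\ref{lem:symDeterministic} applied to the success probability $\tilde{W}(\{\tilde{S}_i\}_{i\in\{1,\dots,r\}})$. Starting from an arbitrary (possibly non-symmetric, possibly non-deterministic) set of strategies $\{S_i\}_{i\in\{1,\dots,r\}}$, I would first invoke Lemma~\ref{lem:symNonsym} to extract an index $i^*$ such that the symmetric strategy $\{S_{i^*},\dots,S_{i^*}\}$ achieves success probability at least $W(\{S_i\}_{i\in\{1,\dots,r\}})$. This is the content of the first lemma and collapses the space of candidate optima to the symmetric slice.

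Next I would apply Lemma~\ref{lem:symDeterministic} to the symmetric strategy $S_{i^*}$ to obtain a symmetric deterministic strategy $S'$ with success probability at least that of $\{S_{i^*},\dots,S_{i^*}\}$. Concretely, since $\tilde{W}$ is monotone under the node-by-node probability-shifting procedure used in the proof of Lemma~\ref{lem:symDeterministic} (repeated applications strictly improve the sum $\sum_{a\in N} p(a)^r$ whenever the marginal distribution is not a point mass on the most probable image node), iterating the procedure a finite number of times yields a symmetric deterministic strategy whose associated $\tilde{S}'$ dominates $\tilde{S}_{i^*}$.

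Chaining the two inequalities gives
\begin{equation}
W(\{S_i\}_{i\in\{1,\dots,r\}}) \le \tilde{W}(\{\tilde{S}_{i^*},\dots,\tilde{S}_{i^*}\}) \le \tilde{W}(\{\tilde{S}',\dots,\tilde{S}'\}),
\end{equation}
so every strategy is dominated by some symmetric deterministic one. Taking the supremum over arbitrary strategies on the left therefore equals the supremum restricted to symmetric deterministic strategies, which is the statement of the corollary. Since both Lemmas are already established, no further work is required.

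The only subtlety worth flagging is that Lemma~\ref{lem:symDeterministic} as written assumes one starts from a symmetric strategy (the common strategy $S$ is repeated across agents), which is exactly why the order of application matters: one must first symmetrize via Lemma~\ref{lem:symNonsym} and only then determinize via Lemma~\ref{lem:symDeterministic}. This is the main (and essentially only) obstacle to making the argument airtight, and it is resolved simply by performing the reductions in the correct sequence.
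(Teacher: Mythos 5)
Your proposal is correct and follows exactly the paper's route: the paper obtains the corollary by simply chaining Lemma~\ref{lem:symNonsym} (symmetrize to the best agent's strategy) with Lemma~\ref{lem:symDeterministic} (determinize the resulting symmetric strategy), in that order. Your additional remark about why the order of the two reductions matters is a sensible clarification of a point the paper leaves implicit, but it does not change the argument.
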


Note that these results may not be applicable in other variants of the rendezvous task, such as when agents cannot be placed in the same initial nodes, in which case it is still possible to find the optimal solution by investigating only deterministic strategies, because the success probability remains a linear function of the probabilities, which are subject to linear constraints.

\subsection{Deterministic strategies for Domination on Graphs}
\label{sec:detDomination}

Let us now consider the graph domination task.

\begin{lemma}
    \label{lem:Domination}
	Deterministic strategies for the domination task are at least as good as non-deterministic strategies.
\end{lemma}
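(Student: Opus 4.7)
The plan is to derandomize the agents one at a time, exploiting the fact that, with all other agents' strategies fixed, the expected score is an affine function of the remaining agent's strategy. First I would write the score explicitly: for a vertex $v \in N$, let $N[v] := \{v\}\cup\{u : (u,v) \in E\}$ denote its closed neighbourhood, and let $q_j(v) := \sum_{a \in N[v]} p_j(a)$ be the probability that agent $j$ lands in $N[v]$. A vertex $v$ is dominated precisely when at least one agent sits in $N[v]$, so the expected number of dominated vertices is
\[
K(\{S_j\}_{j=1,\dots,r}) = \sum_{v \in N}\!\left[1 - \prod_{j=1}^{r}\bigl(1 - q_j(v)\bigr)\right].
\]

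Next, I would fix all strategies except $S_i$. Since $q_j(v)$ is constant for $j \neq i$, and $q_i(v) = \tfrac{1}{n}\sum_{x \in N}\sum_{a \in N[v]} p_i(a|x)$ is linear in the entries of $S_i$, the score $K$ becomes an affine function of $S_i$. The feasible set for $S_i$, namely the set of stochastic maps $\{p_i(a|x)\}_{a,x}$ with $p_i(a|x) \ge 0$ and $\sum_a p_i(a|x) = 1$ for every $x$, is a Cartesian product of probability simplices and is therefore a convex polytope whose extreme points are exactly the deterministic strategies. An affine function on a convex polytope attains its maximum on a vertex, so there exists a deterministic strategy $S_i^{\mathrm{det}}$ for which replacing $S_i$ by $S_i^{\mathrm{det}}$ does not decrease $K$.

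The argument is then closed by induction on the agent index: derandomize $S_1$, then with $S_1^{\mathrm{det}}$ fixed derandomize $S_2$, and so on through $S_r$. After $r$ such swaps every strategy is deterministic and $K$ has only grown weakly, which is exactly what the lemma asserts.

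The main subtlety, rather than an obstacle, is that unlike the rendezvous case the derandomization must be carried out on each agent individually, and one must resist first reducing to symmetric strategies: in the domination task, agents are typically rewarded for spreading across disjoint neighbourhoods, so symmetric strategies can be strictly suboptimal. Consequently the exchange trick of Lemma~\ref{lem:symDeterministic} is unavailable, and we must rely on the global affinity of $K$ in each coordinate $S_i$ rather than on any pointwise rearrangement within a single agent's strategy.
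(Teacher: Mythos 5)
Your proof is correct and follows essentially the same route as the paper: fix all but one agent, observe that the expected score is affine in that agent's strategy (the paper phrases this as a greedy best response for each input $x$, you phrase it as maximizing an affine function over a product of simplices whose vertices are the deterministic strategies), and iterate over the agents. Your explicit product formula $1-\prod_j(1-q_j(v))$ tacitly assumes independent starting positions, but the affinity argument itself survives without it, so this is only a presentational caveat.
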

\begin{proof}
	Let's consider $K(\{S_j\}_{j \in \{1,...,r\}})$ the score, i.e. the average number of dominated points, of a given strategy $\{S_j\}_{j \in \{1,...,r\}}$. A set of strategies that provides score at least as high as $K(\{S_j\}_{j \in \{1,...,r\}})$ can be obtained by replacing the strategy of the first agent with the following deterministic strategy.
	
	Knowing the strategies $\{S_j\}_{j \in \{2,...,r\}})$, for any fixed input $x\in N$ for the $1$-th agent the best move would be to go to the accessible node which would provide the highest increment in the score, knowing what is the probability that any node is going to be dominated by any of the other agents. Assigning this deterministic strategy to the first agent the score is greater or equal to the case in which they were using any other strategy.
	
	Then, repeating the procedure to each of the agents and updating the strategies employed by them, we obtain the thesis.
\end{proof}

\section{Results}
\label{sec:Results}

The content of the section includes the results of rendezvous with quantum entanglement in sec.~\ref{sec:rendezvousResults}, domination with quantum entanglement in sec.~\ref{sec:dominationResults}, and a detailed analysis of selected cases in sec.~\ref{sec:detailed}.

In the following will be presented the scenarios where we applied this technique and the tables that contain the results obtained.

In tables the columns "Classical" denote the maximum values which can be achieved by classical strategies. The columns "Random" denote the value obtained when the parties choose in a random uniform way which node to reach from the starting position. The values associated with the column "NPA" are calculated for the level $1+ab$ when they appear without $ \cdot ^*$, otherwise, they are calculated for level $2$. The column "Adv." contains the gain obtained by using quantum resources compared to the optimal classical strategy. The quantity is expressed in percents, and is calculated according to the following expression:
\begin{equation}
\label{adv}
	\frac{Q-R}{C-R} - 1 = \frac{Q-C}{C-R},
\end{equation}
where $C$ denotes the classical value, i.e. the highest value that can be achieved when the parties can use only classical strategies. $R$ denotes the average success probability for the rendezvous task and the average number of dominated nodes for the domination task, when the agents choose in a random uniform way which node to reach from the starting position, selecting the node to reach among the nodes that can be reached while respecting the rules of the given task. While $Q$ denotes the value achieved when the agents employ the strategy found with the see-saw technique.

When, for a given scenario, within at least $100$ runs of see-saw algorithm we didn't find an average success probability greater than the one obtainable employing classical strategies and which doesn't match the upper bound found with the level $2$ of the NPA hierarchy, we say that the results are inconclusive. 

Inconclusive results can happen because the level of the NPA hierarchy investigated, or the number of runs of see-saw algorithm, or the dimension of the investigated quantum states or the rank of the projectors generated with see-saw were too low.

Applying see-saw algorithm we investigated only projectors of rank $1$ for all the cases of $3$ agents and up to rank $2$ for the cases involving $2$ agents. For the dimension of the generated quantum  states, we have always chosen the lowest one determined by the rank of the generated projectors.

Note that in the case of the rendezvous task, due to lemma ~\ref{lem:symDeterministic}, to find the best classical value it is sufficient to investigate only all the symmetric  deterministic strategies.  In the case of the domination task, due to lemma ~\ref{lem:Domination},  to find the best classical value it is sufficient to investigate only all the deterministic strategies. 

Both for the domination task and rendezvous task, with the additional constraint that the agents can't start from the same positions, to find the classical value it is sufficient to study only deterministic strategies. This is true because the average success probability for the rendezvous task and the average number of dominated points for the domination task are still linear function of probabilities which are subject to linear constraints.

When dealing with quantum strategies constrained to be symmetric, we considered only strategies for which all the agents can perform only the same set of measurements on the state and that to them is associated the same same marginal probability distribution, adjusting accordingly the constraints when performing the see-saw technique.

\subsection{Rendezvous with quantum entanglement}
\label{sec:rendezvousResults}

We will now present the results obtained when the agents are dealing with the single-step rendezvous task.

%In Tab.~\ref{Rend_2_any} we present the cases for which we found an advantage when two agents are performing the rendezvous task and they can start from any position.
Tab.~\ref{Rend_2_any} illustrates the scenarios where a quantum advantage is observed for two agents undertaking the rendezvous task, with the freedom to commence from any location, including the same initial position for both parties.

\begin{table}[htb!]
	\centering
	\caption{Results associated with the single-step rendezvous task, two agents case, when the agents can start from any position.}
	\label{Rend_2_any}
	\begin{tabular}{|c|c|c|c|c|}
		\hline
		Name & Random & Classical & NPA & Adv. [\%]\\
		\hline
		\begin{tabular}[c]{@{}c@{}}  tetrahedron \\ Fig.~\ref{fig:4vertexes_diamond} \end{tabular} &    $\frac{1}{4}$ & $\frac{5}{8}$ &  $0.64506^*$ & 5 \\ % 1.05350 \\
		\hline
		\begin{tabular}[c]{@{}c@{}} square \\curly, Fig.~\ref{fig:4vertexes}  \end{tabular} &    $\frac{1}{4}$  &  $\frac{5}{8}$ &  $0.64506^*$ & 5 \\ % 1.05350 \\
		\hline
		\begin{tabular}[c]{@{}c@{}} pentagon \\curly, Fig.~\ref{fig:5vertexes} \end{tabular}  &   $\frac{1}{5}$ & $\frac{13}{25}$  &  $0.53009^*$  & 3 \\ % 1.03153 \\
		\hline
		\begin{tabular}[c]{@{}c@{}} arrow \\Fig.~\ref{fig:5vertexes_arrow}  \end{tabular} &   0.20667 & $\frac{13}{25}$ &   $0.52051^*$ & 0.1 \\ % 1.00163 \\
		\hline
		\begin{tabular}[c]{@{}c@{}} clamp \\ Fig.~\ref{fig:6vertexes_clamp} \end{tabular}  &   0.18827 & $\frac{7}{18}$ &  $0.40063^*$ & 6 \\ % 1.05853  \\
		\hline
		\begin{tabular}[c]{@{}c@{}}hat \\ Fig.~\ref{fig:6vertexes_hat} \end{tabular} &    0.17207 & $\frac{5}{9}$ &  $0.58333 \approx\frac{7}{12}$ & 7 \\ % 1.07244 \\
		\hline
		\begin{tabular}[c]{@{}c@{}} house \\ Fig.~\ref{fig:6vertexes_house}  \end{tabular} &   0.18210 & $\frac{5}{9}$ &  $0.58333 \approx \frac{7}{12}$ & 7 \\ % 1.07438 \\
		\hline
		\begin{tabular}[c]{@{}c@{}} caltrop \\ Fig.~\ref{fig:6vertexes_caltrop} \end{tabular} &   0.20833 & $\frac{5}{9}$ &  $0.58333 \approx\frac{7}{12}$ & 8 \\ % 1.0799 \\
		\hline
		\begin{tabular}[c]{@{}c@{}} cube \\ Fig.~\ref{fig:8vertexes_cube}  \end{tabular} &   $\frac{1}{8}$ & $\frac{5}{16}$ &   $0.32253^*$ &  5 \\ % 1.05350 \\
		\hline
		\begin{tabular}[c]{@{}c@{}} triangle \\ 3-gon \end{tabular}  & $\frac{1}{3}$ & $\frac{5}{9}$  &  $0.58333 \approx\frac{7}{12}$ &   13 \\ % 1.125  \\
		\hline
		\begin{tabular}[c]{@{}c@{}}pentagon \\ 5-gon \end{tabular}  & $\frac{1}{5}$ & $\frac{9}{25}$ &  0.38090 &  13 \\ %  1.13064   \\
		\hline
		\begin{tabular}[c]{@{}c@{}} hexagon \\ 6-gon   \end{tabular}  & $\frac{1}{6}$  &$\frac{5}{18}$  & 0.29167 &  13 \\ % 1.125 \\
		\hline
		\begin{tabular}[c]{@{}c@{}}  heptagon \\ 7-gon \end{tabular}  & $\frac{1}{7}$ & $\frac{13}{49}$ &  0.27864 & 11 \\ % 1.10890  \\
		\hline
		\begin{tabular}[c]{@{}c@{}} ennagon \\ 9-gon \end{tabular}  &  $\frac{1}{9}$ & $\frac{17}{81}$ & 0.21887  &  9 \\ % 1.09108  \\
		\hline
		\begin{tabular}[c]{@{}c@{}}decagon \\ 10-gon \end{tabular} &   $\frac{1}{10}$ &  $\frac{9}{50}$  &   0.19045 & 13 \\ % 1.13063 \\
		\hline
		\begin{tabular}[c]{@{}c@{}} 11-gon   \end{tabular}  &   $\frac{1}{11}$  &  $\frac{21}{121}$  &  0.17998 & 8 \\ % 1.07772  \\
		\hline
		\begin{tabular}[c]{@{}c@{}} 13-gon \end{tabular} &   $\frac{1}{13}$ &  $\frac{25}{169}$ &  0.15273 & 7 \\ % 1.06759 \\
		\hline
		\begin{tabular}[c]{@{}c@{}} 3-line curly \end{tabular}&   $\frac{1}{3}$ &  $\frac{5}{9}$ &  $0.58333 \approx\frac{7}{12}$ & 13 \\ % 1.125 \\
		\hline
		5-line curly &   $\frac{1}{5}$ & $\frac{9}{25}$ &   0.38090 & 13 \\ %  1.13064 \\
		\hline
		7-line curly &   $\frac{1}{7}$ & $\frac{13}{49}$ &   0.27864 & 11 \\ %  1.10890\\
		\hline
	\end{tabular}
\end{table}

%In Tab.~\ref{Rend_2_Nany_all} we list the cases in which there is an advantage when the agents can't start from the same positions.
In contrast, Tab.~\ref{Rend_2_Nany_all} delineates situations where this advantage occurs when agents are refrained from initiating from identical positions. This dichotomy underscores the robustness of quantum strategies across diverse starting conditions, offering a notable advantage over classical counterparts.

\begin{table}[htb!]
	\centering
	\caption{Results associated with the single-step rendezvous task, two agents case, when the agents can't start in the same positions.}
	\label{Rend_2_Nany_all}
	\begin{tabular}{|c|c|c|c|c|}
		\hline
		Name & Random & Classical & NPA & Adv. [\%]\\
		\hline
		\begin{tabular}[c]{@{}c@{}}  tetrahedron \\ Fig.~\ref{fig:4vertexes_diamond} \end{tabular} & $\frac{2}{9}$ & $\frac{1}{2}$ & $0.53333^* \approx  \frac{8}{15}$ & 12 \\ % 1.12  \\
		\hline
		\begin{tabular}[c]{@{}c@{}} square \\curly, Fig.~\ref{fig:4vertexes}  \end{tabular}   &  $\frac{2}{9}$ &  $\frac{1}{2}$ & $0.53333^* \approx \frac{8}{15}$ & 12 \\ % 1.12  \\
		\hline
		\begin{tabular}[c]{@{}c@{}} pentagon \\curly, Fig.~\ref{fig:5vertexes} \end{tabular} & $\frac{1}{6}$  & $\frac{2}{5}$ & $0.41316^*$ & 6 \\ % 1.05639  \\
		\hline
		\begin{tabular}[c]{@{}c@{}} arrow \\Fig.~\ref{fig:5vertexes_arrow}  \end{tabular}  & $0.16667$  &  $\frac{2}{5}$  & $0.40490^*$ & 2 \\ % 1.02101  \\
		\hline
		\begin{tabular}[c]{@{}c@{}} clamp \\ Fig.~\ref{fig:6vertexes_clamp} \end{tabular}  & 0.13704   & $\frac{4}{15}$ & $0.28229^*$ & 12 \\ % 1.12050  \\
		\hline
		\begin{tabular}[c]{@{}c@{}}hat \\ Fig.~\ref{fig:6vertexes_hat} \end{tabular}  & 0.15093  & $\frac{7}{15}$ & $0.50000 \approx \frac{1}{2}$ & 11 \\ % 1.10557  \\
		\hline
		\begin{tabular}[c]{@{}c@{}} house \\ Fig.~\ref{fig:6vertexes_house}  \end{tabular}  & 0.15463  & $\frac{7}{15}$ & $0.50000 \approx \frac{1}{2}$ & 11 \\ % 1.10682  \\
		\hline
		\begin{tabular}[c]{@{}c@{}} caltrop \\ Fig.~\ref{fig:6vertexes_caltrop} \end{tabular}  & $\frac{7}{40}$  & $\frac{7}{15}$  & $0.50000 \approx \frac{1}{2}$ & 11 \\ % 1.11429  \\
		\hline
		\begin{tabular}[c]{@{}c@{}} cube \\ Fig.~\ref{fig:8vertexes_cube}  \end{tabular}  & $\frac{2}{21}$  & $\frac{3}{14}$ & $0.22857^*$ & 12 \\ % 1.11995  \\
		\hline
	\end{tabular}
\end{table}

%In Tab.~\ref{Rend_2_Nany_sym} we show the case when the agents can't start in the same positions, but they are allowed to use only symmetric strategies.
Expanding on this comparison, Tab.~\ref{Rend_2_Nany_sym} delves into the case where agents can't start in the same positions, and they are allowed to use only symmetric strategies. Here, the introduction of symmetric strategies sheds light on the intricate interplay between quantum and classical approaches.

\begin{table}[htb!]
	\centering
	\caption{Results associated with the single-step rendezvous task, two agents case, when the agents can't start in the same positions and they are constrained to adopt only symmetric strategies. Note that the first nine cases has identical values as those in Tab.~\ref{Rend_2_Nany_all}.} 
	\label{Rend_2_Nany_sym}
	\begin{tabular}{|c|c|c|c|c|}
		\hline
		Name & Random & Classical & NPA & Adv. [\%]\\
		\hline
		\begin{tabular}[c]{@{}c@{}}  tetrahedron \\ Fig.~\ref{fig:4vertexes_diamond} \end{tabular}  & $\frac{2}{9}$ & $\frac{1}{2}$ & $0.53333^* \approx\frac{8}{15}$ &  12 \\ % 1.12  \\
		\hline
		\begin{tabular}[c]{@{}c@{}} square \\curly, Fig.~\ref{fig:4vertexes}  \end{tabular} & $\frac{2}{9}$ & $\frac{1}{2}$ &  $0.53333^* \approx\frac{8}{15}$ &  12 \\ % 1.12  \\
		\hline
		\begin{tabular}[c]{@{}c@{}} pentagon \\curly, Fig.~\ref{fig:5vertexes} \end{tabular} & $\frac{1}{6}$ & $\frac{2}{5}$  &  $0.41316^*$ & 6 \\ % 1.05639  \\
		\hline
		\begin{tabular}[c]{@{}c@{}} arrow \\Fig.~\ref{fig:5vertexes_arrow}  \end{tabular} & $\frac{1}{6}$ & $\frac{2}{5}$  & $0.40490^*$ & 2 \\ % 1.02102  \\
		\hline
		\begin{tabular}[c]{@{}c@{}} clamp \\ Fig.~\ref{fig:6vertexes_clamp} \end{tabular} & 0.13704 & $\frac{4}{15}$ & $0.28229^*$ & 12 \\ %  1.12050  \\
		\hline
		\begin{tabular}[c]{@{}c@{}}hat \\ Fig.~\ref{fig:6vertexes_hat} \end{tabular} & 0.15093 & $\frac{7}{15}$ & $0.50000 \approx \frac{1}{2}$ & 11 \\ % 1.10557  \\
		\hline
		\begin{tabular}[c]{@{}c@{}} house \\ Fig.~\ref{fig:6vertexes_house}  \end{tabular} & 0.15463 & $\frac{7}{15}$ & $0.50000 \approx \frac{1}{2}$ & 11 \\ %  1.10682   \\
		\hline
		\begin{tabular}[c]{@{}c@{}} caltrop \\ Fig.~\ref{fig:6vertexes_caltrop} \end{tabular} & $\frac{7}{40}$ & $\frac{7}{15}$ & $0.50000 \approx \frac{1}{2}$ & 11 \\ % 1.11429  \\
		\hline
		\begin{tabular}[c]{@{}c@{}} cube \\ Fig.~\ref{fig:8vertexes_cube}  \end{tabular} &  $\frac{2}{21}$  & $\frac{3}{14}$ &  $0.22857^*$ & 12 \\ % 1.11995  \\
		\hline
		\begin{tabular}[c]{@{}c@{}} triangle \\ 3-gon \end{tabular} & $\frac{1}{4}$ & $\frac{1}{3}$ & $0.50000 \approx \frac{1}{2}$ & 200 \\ % 3  \\
		\hline
		\begin{tabular}[c]{@{}c@{}}pentagon \\ 5-gon \end{tabular} & $\frac{1}{8}$ &  $\frac{1}{5}$ & $0.25000 \approx \frac{1}{4}$ & 67 \\ % 1.66667  \\
		\hline
		\begin{tabular}[c]{@{}c@{}} hexagon \\ 6-gon   \end{tabular} & $\frac{1}{10}$ &  $\frac{2}{15}$  & $0.20000 \approx \frac{1}{5}$ & 200 \\ % 3  \\
		\hline
		\begin{tabular}[c]{@{}c@{}} heptagon \\ 7-gon \end{tabular} & $\frac{1}{12}$ & $\frac{1}{7}$ & $0.16667 \approx \frac{1}{6}$ & 40 \\ % 1.4  \\
		\hline
		\begin{tabular}[c]{@{}c@{}} ennagon \\ 9-gon \end{tabular} & $\frac{1}{16}$ &  $\frac{1}{9}$ &  $0.12500 \approx \frac{1}{8}$ & 29 \\ % 1.28571  \\
		\hline
		\begin{tabular}[c]{@{}c@{}}decagon \\ 10-gon \end{tabular} & $\frac{1}{18}$ &  $\frac{4}{45}$ & $0.11111 \approx \frac{1}{9}$ & 67 \\ % 1.66667  \\
		\hline
		\begin{tabular}[c]{@{}c@{}} 11-gon   \end{tabular} & $\frac{1}{20}$ & $\frac{1}{11}$ & $0.10000 \approx\frac{1}{10}$ & 22 \\ % 1.22222  \\
		\hline
		\begin{tabular}[c]{@{}c@{}} 13-gon \end{tabular} & $\frac{1}{24}$  & $\frac{1}{13}$ & $ 0.08333 \approx \frac{1}{12}$ & 18 \\ % 1.181818  \\
		\hline
		\begin{tabular}[c]{@{}c@{}} 3-line  curly \end{tabular} &  $\frac{1}{4}$ & $\frac{1}{3}$ &  $0.50000 \approx \frac{1}{2}$ &  200 \\ % 3   \\
		\hline
		5-line curly&  $\frac{1}{8}$  & $\frac{1}{5}$ &  $0.25000 \approx \frac{1}{4}$ & 67 \\ % 1.66667  \\
		\hline
		7-line curly&  $\frac{1}{12}$  & $\frac{1}{7}$ &  $0.16667 \approx \frac{1}{6}$ & 40 \\ % 1.4  \\
		\hline
	\end{tabular}
\end{table}

%In the case in which any starting position is allowed, as shown in sec.~\ref{sec:symDetRendez}, classical strategies can achieve their maximum success probability with symmetric strategies.
Remarkably, classical strategies exhibit limitations in maximizing success probabilities under symmetric conditions unless, as shown in sec.~\ref{sec:symDetRendez}, any starting position is allowed. On the other hand, quantum strategies continue to demonstrate a clear advantage in both cases, as we elucidate further in sec.~\ref{ssec:triangle}.

The juxtaposition of these findings underscores the nuanced dynamics at play in the rendezvous task, emphasizing the pivotal role of starting conditions and strategy symmetry. While classical strategies attain their peak when any starting position is permissible, quantum strategies transcend these limitations, showcasing their adaptability and efficacy even under symmetric constraints. This nuanced understanding is pivotal for harnessing the full potential of quantum strategies in real-world applications of multi-agent coordination.

When dealing with the rendezvous task with three agents we have found no advantage when they can exploit quantum resources, both when they can start in any position, both when they can't start in the same position, both with and without the restriction of using symmetric strategies for the following graphs: square curly (Fig.\ref{fig:4vertexes}), double triangle (Fig.\ref{fig:4vertexes_diamond_flat}), cube (Fig.\ref{fig:8vertexes_cube}), tetraedron (Fig.\ref{fig:4vertexes_diamond}), arrow curly (Fig.\ref{fig:5vertexes2}), clamp (Fig. \ref{fig:6vertexes_clamp}), hat (Fig.\ref{fig:6vertexes_hat}), house (Fig.\ref{fig:6vertexes_house}). While it is not conclusive for pentagon curly (Fig.\ref{fig:5vertexes}) and arrow (Fig.\ref{fig:5vertexes_arrow}) when the agents can't start in the same positions. We found no advantage for any case for the graphs $n$-lines and $n$-lines curly for $4 \leq n \leq 8$. 

We have no reason to exclude that, in the three agent case, an advantage could be found by using more computational resources, when dealing with the inconclusive cases, or increasing the number of nodes of the graphs studied.

Let us denote the considered cases given in Tabs~\ref{Rend_2_any}, \ref{Rend_2_Nany_all}, and~\ref{Rend_2_Nany_sym} by $\mathcal{A}$, $\mathcal{B}$, and $\mathcal{C}$, respectively. Recall that they refer to cases when any starting position is allowed ($\mathcal{A}$), when the agents can’t start in the same positions, also with non-symmetric strategies ($\mathcal{B}$), and when the agents can’t start in the same positions and they use symmetric strategies ($\mathcal{C}$).

We observe that if for a given graph $1+ab$ is enough to determine its exact Tsirelson bound for one of the cases ($\mathcal{A}$, $\mathcal{B}$, or $\mathcal{C}$) then it is enough for all cases ($\mathcal{A}$, $\mathcal{B}$, and $\mathcal{C}$). What is more, for cycles in all cases the NPA level $1+ab$ is enough.

All the cases in $\mathcal{C}$ that do not appear in $\mathcal{B}$ are cases for which we found that there is no advantage when the agents share a quantum state, so they are cases for which the quantum correlations are exploited to break the symmetry between the agents, as will be described more in detail in sec.~\ref{ssec:triangle}. Additionally, all the cases in $\mathcal{B}$ present the same average success probabilities both for the classical and both for the quantum case as the ones that they present in $\mathcal{C}$, and so they are cases for which the classical value can be reached by symmetric strategies.

The graphs triangle, 3-line curly, hat, house, and caltrop have always the same value of classical and quantum for a given case $\mathcal{A}$ or $\mathcal{C}$. Except for triangle and 3-line curly, this holds also for case $\mathcal{B}$.

It is worth noting that, at least for the cases analyzed, if for a cycle the number of vertices is divisible by $4$ then there is no advantage in any case $\mathcal{A}$, $\mathcal{B}$, or $\mathcal{C}$.

\subsection{Domination with quantum entanglement}
\label{sec:dominationResults}

In Tabs~\ref{Domination_2_any} and~\ref{Domination_2_Nany} we present the results obtained when the agents deal with the graph domination task when they can exploit quantum resources.

\begin{table}[htb!]
	\centering
	\caption{Results associated with the single-step domination task, two agents case, when the agents can start from any position.}
	\label{Domination_2_any}
	\begin{tabular}{|c|c|c|c|c|}
		\hline
		Name & Random & Classical & NPA & Adv. [\%] \\
		\hline
		\begin{tabular}[c]{@{}c@{}} pentagon\\ curly, Fig.~\ref{fig:5vertexes} \end{tabular} & 4.2 & 4.64  & $4.67361^*$ & 8 \\ % 1.07638   \\
		\hline
		\begin{tabular}[c]{@{}c@{}} caltrop\\ Fig.~\ref{fig:6vertexes_caltrop} \end{tabular} &  5.458333 & 5.88889  & 5.916667 & 6 \\ % 1.06452  \\
		\hline
		\begin{tabular}[c]{@{}c@{}} spike \\ Fig.~\ref{fig:5vertexes_spike}
		\end{tabular} & 4.51333 & 4.92  & 4.93 & 2 \\ % 1.02459  \\
		\hline
		\begin{tabular}[c]{@{}c@{}}clamp\\  Fig.~\ref{fig:6vertexes_clamp}
		\end{tabular} & 4.94907 & 5.44444  & 5.45453 & 2 \\ % 1.02036   \\
		\hline
		pentagon & 4.2 & 4.6  & 4.67361 & 18 \\ % 1.18402   \\
		\hline
		hexagon & 4.50000 & 4.94445  &  5.0000 & 13 \\ % 1.12500  \\
		\hline
		heptagon & 4.71428  & 5.08163  & 5.15517 & 20 \\ % 1.20019  \\
		\hline
		octagon & 4.875 & 5.1875  &  5.23928 & 17 \\ % 1.16569  \\
		\hline
		\begin{tabular}[c]{@{}c@{}}  9-gon \end{tabular} & 5 &  5.24691  &  5.29434 & 19 \\ % 1.19208  \\
		\hline
		\begin{tabular}[c]{@{}c@{}}   10-gon
		\end{tabular} & 5.1 & 5.3  & 5.33680  & 18 \\ % 1.18401   \\
		\hline
		\begin{tabular}[c]{@{}c@{}}   11-gon
		\end{tabular} & 5.18182 & 5.39669  &  5.43395 & 17 \\ % 1.17340  \\
		\hline
		\begin{tabular}[c]{@{}c@{}}   12-gon
		\end{tabular} & 5.25 & 5.47222  &  5.5 & 13 \\ % 1.125  \\
		\hline
		\begin{tabular}[c]{@{}c@{}}   13-gon
		\end{tabular} & 5.30769 & 5.50888  &  5.54543 & 18 \\ % 1.18169  \\
		\hline
		6-line curly & 4.11111 & 4.44445  &  4.44895 & 1 \\ % 1.01352  \\
		\hline
	\end{tabular}
\end{table}

\begin{table}[htb!]
	\centering
	\caption{Results associated with the single-step domination task, two agents case, when the agents can’t start in the same positions.}
	\label{Domination_2_Nany}
	\begin{tabular}{|c|c|c|c|c|}
		\hline
		Name & Random & Classical & NPA & Adv. [\%]\\
		\hline
		\begin{tabular}[c]{@{}c@{}} pentagon\\ curly, Fig.~\ref{fig:5vertexes} \end{tabular} & 4,27778 &  4.7  & 4.73987 & 9 \\ % 1.09444    \\
		\hline
		\begin{tabular}[c]{@{}c@{}}clamp\\  Fig.~\ref{fig:6vertexes_clamp}
		\end{tabular} & 5.01482 & 5.4   & $5.41210^*$ & 3 \\ % 1.03142   \\
		\hline
		\begin{tabular}[c]{@{}c@{}} caltrop\\ Fig.~\ref{fig:6vertexes_caltrop} \end{tabular} & 5.48750 & 5.86667  & 5.9 & 9 \\ % 1.08791  \\
		\hline
		\begin{tabular}[c]{@{}c@{}} spike\\ Fig.~\ref{fig:5vertexes_spike}
		\end{tabular} & 4.56944 & 4.9  & 4.9125 & 4 \\ % 1.03781  \\
		\hline
		\begin{tabular}[c]{@{}c@{}} pentagon \end{tabular}  & 4.25 & 4.5  & 4.59201 & 37 \\ % 1.36803   \\
		\hline
		\begin{tabular}[c]{@{}c@{}}hexagon\end{tabular}  & 4.60000 &  4.93333 & 5.00000 & 20 \\ % 1.2  \\
		\hline
		\begin{tabular}[c]{@{}c@{}}heptagon \end{tabular} & 4.83333 &    5.09524 & 5.18103 &  33 \\ % 1.32758  \\
		\hline
		\begin{tabular}[c]{@{}c@{}}octagon
		\end{tabular} & 5 & 5.21429  & 5.27346 &  28 \\ % 1.27614  \\
		\hline
		\begin{tabular}[c]{@{}c@{}}  9-gon \end{tabular}  & 5.125 & 5.27778  & 5.33113 & 35 \\ % 1.34924   \\
		\hline
		\begin{tabular}[c]{@{}c@{}}   10-gon
		\end{tabular} & 5.22222 & 5.34444  &  5.37423 & 24  \\
		\hline
		\begin{tabular}[c]{@{}c@{}}   11-gon
		\end{tabular} & 5.3 & 5.43636  &  5.47735 & 30 \\ % 1.30056  \\
		\hline
		\begin{tabular}[c]{@{}c@{}}   12-gon
		\end{tabular} & 5.36364 & 5.51515  &  5.54545 & 20 \\ % 1.2  \\
		\hline
		\begin{tabular}[c]{@{}c@{}}   13-gon
		\end{tabular} & 5.41667 & 5.51515  &  5.59088 & 29 \\ % 1.29416  \\
		\hline
	\end{tabular}
\end{table}

For the two agents case we found that there is not an advantage for any case associated with the following graphs: double triangle (Fig.~\ref{fig:4vertexes_diamond_flat}), tetraedron (Fig.~\ref{fig:4vertexes_diamond}), square curly (Fig.~\ref{fig:4vertexes}), arrow (Fig.~\ref{fig:5vertexes_arrow}), arrow curly (Fig.~\ref{fig:5vertexes2}), pyramid double (Fig.~\ref{fig:6vertexes_pyramid_double}), hat (Fig. \ref{fig:6vertexes_hat}), cube (Fig.~\ref{fig:8vertexes_cube}), $3$-line curly and $4$-line curly.

In the case of spike curly (Fig.~\ref{fig:5vertexes1}), we found that there is no advantage in using quantum resources when the agents can start from any position, while we found an advantage of $+0.224\%$ when they can't start in the same position, although with see-saw we didn't obtain the value found with the level $2$ of the NPA hierarchy.

When the agents can't start in the same positions we didn't find any advantage for all the $n$-line curly with $4\leq n \leq 8$.

When dealing with the three agents cases we found no advantage when the agents can't start in the same positions for all the $n$-line and $n$-line curly with $4\leq n \leq 8$ and, when they can start in any position not conclusive for $n$-line and $n$-line curly with $4\leq n \leq 7$ and for $n=8$ we found no advantage. In the case of $n$-gon, when the agents can start in the same positions, we found no advantage for $5\leq n \leq 8$, while, for the same graphs, when the agents can start in any positions, the results are inconclusive.

We have no reason to exclude that, in the three agent case, an advantage could be found by using more computational resources, when dealing with the inconclusive cases, or increasing the number of nodes of the graphs to study.

\subsection{Detailed analysis of selected cases}
\label{sec:detailed}

We will now consider several examples of strategies for the rendezvous and domination tasks to illustrate their working. In sec.~\ref{ssec:triangle} we discuss one of the cases from Tab.~\ref{Rend_2_Nany_sym} with particularly high advantage and discuss the role of quantum resources in this particular case. Next, in sec.~\ref{ssec:hat}, we discuss the quantum nature of the advantage for the rendezvous task with agents that are not symmetric. Finally, in sec.~\ref{ssec:pentagon} we explicitly illustrate a particular case of quantum advantage for the domination task.

% we discuss a purely quantum nature of the advantage for rendezvous task.

\subsubsection{The 3-line curly graph with symmetric strategies for rendezvous}
\label{ssec:triangle}

Considering the case of the single-step rendezvous task with two agents, when the agents can’t start in the same positions and they are constrained to adopt only symmetric strategies, we now focus on the particular case of the graph \textit{3-line curly}, as reported in Tab.~\ref{Rend_2_Nany_sym}. We have that in the case, the best classical strategy achieves the average success probability $C=\frac{1}{3}$, while, when the players choose randomly the node to reach, they have an average success probability $R=\frac{1}{4}$. The quantum success probability is $Q=\frac{1}{2}$.

%Indeed, for the random strategy, in the triangle graph, the agents choose to go to one of the neighboring nodes with equal probability $0.5$. If one of the agents starts in node $0$ and the other in node $1$, they win only if both randomize to move to node $2$, which happens with probability $\frac{1}{4} = \frac{1}{2} \times \frac{1}{2}$. Similarly if one of the parties starts in node $0$ and the other in node $2$, they win only if both head to node $1$, with probability $\frac{1}{4}$; if one of the parties starts in node $1$ and the other in node $2$ they win only if they both move to node $0$, which happens also with probability $\frac{1}{4}$. Thus the average winning probability with random strategy is $\frac{1}{4}$.

\begin{figure}
    \centering
    \includegraphics[width=35mm]{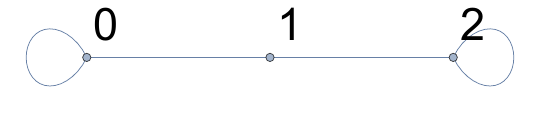}
    \caption{ Graph with a 3-line curly structure.}
    \label{fig:3_line_curly}
\end{figure}

For the 3-line curly graph, shown in Fig.~\ref{fig:3_line_curly}, if one of the parties starts in node $0$ and the other in node $1$, they win only if the former decides to stay in node $0$, and the other moves to node $0$, and this happens with probability $\frac{1}{4} = \frac{1}{2} \times \frac{1}{2}$. If one of the parties starts in node $0$ and the other in node $2$, they win only if both decide to move to node $1$, with probability $\frac{1}{4}$. Finally, if one of the parties starts in node $1$ and the other in node $2$ they win only if the former moves to node $2$, and the latter decides to stay in node $2$, which happens, for the random strategy, again with probability $\frac{1}{4}$.
Thus the average winning probability with random strategy is $\frac{1}{4}$.

%For both triangle and 3-line curly graphs the optimal symmetric deterministic strategy is the following: If the party is in node $0$ or $1$, then it should move to the possible node with the smaller label; if the party is in node $2$ it should move to the possible node with the largest label.
The optimal symmetric deterministic strategy is the following: If the party is in node $0$ or $1$, then it should move to the possible node with the smaller label, i.e. $0$; if the party is in node $2$ it should move to the possible node with the largest label, i.e. $2$.

%This strategy for the triangle graph applies as follows. If one of the parties starts in node $0$ and the other in node $1$, they exchange their positions and thus fail the game. If one of the parties starts in node $0$ and the other in node $2$, they both move to node $1$ and thus they win the game. If one of the parties starts in node $1$ and the other in node $2$, the former moves to node $0$ and the latter to node $1$, and they fail. The probability of winning is thus $\frac{1}{3}$.

Stating this more explicitely, for the 3-line curly graph, when using the optimal deterministic strategy, if one of the parties starts in node $0$ and the other in node $1$, the former stays in node $0$, and the latter moves to node $0$, and they win. If one of the parties starts in node $0$ and the other in node $2$, then the former stays in node $0$, and the other stays in node $2$, and they fail. If one of the parties starts in node $1$ and the other in node $2$, the former moves to node $0$, and the latter stays in node $2$, and they fail. Thus, the winning probability is $\frac{1}{3}$.

In the quantum case, the strategy is found applying see-saw, finding the quantum state $\rho$ shared by the agents and the measurements performed by them, viz. $\{M(a,x)\}_{x \in \{0,1,2\}, a\in \{0,1\}}$ and $\{N(b,y)\}_{y \in \{0,1,2\}, b \in \{0,1\}}$, with $x$ and $y$ denoting the input node and $a$ and $b$ denoting the output of the measurement. It is sufficient to consider only measurements to which there are associated $2$ outcomes because in this graph each vertex is attached at most $2$ edges. In this example the joint probability distribution of the measurement results $\{P(a,b|x,y)\}_{a,b \in \{0,1\}, x,y \in \{0,1,2\}}$ can be transformed to a strategy $\{\tilde P(\tilde a,\tilde b|x,y)\}_{\tilde a,\tilde b \in \{0,1\}, x,y\in \{0,1,2\}}$ where $\tilde a$ and $\tilde b$ denote the output nodes. For example, if the agent is starting in the vertex $x\in \{0,1,2\}$, associating to the output $a=0$ the vertex labeled by the smallest number among the ones which can be reached starting from the input node $x$, and to the output $a=1$ the vertex labeled by the largest number among the ones which can be reached starting from the input node $x$.
%So the best strategy we found uses the separable state $\frac{1}{2} (\proj{01} + \proj{10})$. We see that this state is symmetric with respect to both parties. Since the agents are symmetric, then their measurements are also equal, and are given by the following formulae:
%\begin{subequations}
%	\begin{equation}
%		M(0,0) = N(0,0) = \proj{0}, M(1,0) = N(1,0) = \proj{1}, 
%	\end{equation}
%	\begin{equation}
%		M(0,1) = N(0,1) = \proj{1}, M(1,1) = N(1,1) = \proj{0}, 
%	\end{equation}
%	\begin{equation}
%		M(0,2) = N(0,2) = \proj{0}, M(1,2) = N(1,2) = \proj{1}.
%	\end{equation}
%\end{subequations}
%From these state and measurements
We have the following quantum strategy $\{P(a,b|x,y)\}_{a,b \in \{0,1\},x,y\in \{0,1,2\}}$:
\begin{align}
    & P(1,0|0,0)= P(0,1|0,0) = P(0,0|1,0) = P(1,1|1,0) \notag\\&= P(1,0|2,0) = P(0,1|2,0)  = P(0,0|0,1) = P(1,1|0,1)\notag\\& = P(1,0|1,1) = P(0,1|1,1) = P(0,0|2,1) = P(1,1|2,1) \notag\\& = P(1,0|0,2) = P(0,1|0,2) = P(0,0|1,2) = P(1,1|1,2) \notag\\&= P(1,0|2,2) = P(0,1|2,2) = 0.5,
\end{align}
and $0$ otherwise.

So, considering only the cases for which $x\neq y $, we have that the average success probability provided by this strategy is 
%\begin{align} % triangle
%	Q &=\frac{ P(1,1|1,0) +P (1, 0|2, 0)+P (1, 1|0, 1)}{6}\notag\\&+\frac{P (0, 0|2, 1)+P (0, 1|0, 2)+P (0, 0|1, 2) }{6}=0.5.
%\end{align}
\begin{align}
	Q &=\frac{ P(0,0|1,0) +P (0, 1|2, 0)+P (0, 0|0, 1)}{6}\notag\\&+\frac{P (1, 1|2, 1)+P (1, 0|0, 2)+P (1, 1|1, 2) }{6}=0.5.
\end{align}
where $6$ is the number of possible inputs for the agents for which they are not starting in the same node. So, applying Eq.~\eqref{adv}, we obtain that the advantage is $ 200\%$.

%We note here that, since the state used is separable, the advantage of using quantum strategy is not of intrinsically quantum nature, but is rather used to break the symmetry between the agents. In the classical rendezvous solution, it was already observed, that breaking the symmetry between agents can significantly improve the performance of rendezvous protocols~\cite{yu1996agent,ta2014deterministic,pelc2019using,czyzowicz2019symmetry}. To see this, consider the case in which Alice and Bob are not symmetric, and Alice always moves on a triangle in the so-called "clock-wise" direction~\cite{alpern1995rendezvous,flocchini1998sense,alpern2002rendezvous,pelc2012deterministic}, whereas Bob always moves in the "counter-clock-wise" direction. They win in the following $3$ out of $6$ cases of possible starting positions, obtaining an average winning probability $0.5$: if Alice starts at node $0$ and Bob at node $2$ and they meet at node $1$; if Alice starts at node $1$ and Bob at node $0$ and they meet at node $2$; and if Alice starts at node $2$ and Bob starts at node $1$ and they meet at node $0$. Exactly the same success probability will be obtained if Alice and Bob reverse their roles. The significance of quantum resources in the presented strategy is to effectively decide which of the parties is following "clock-wise" and which "counter-clock-wise".

\subsubsection{The \textit{hat} graph with non-symmetric strategies for rendezvous}
\label{ssec:hat}

\begin{figure}
    \centering
    \includegraphics[width=50mm]{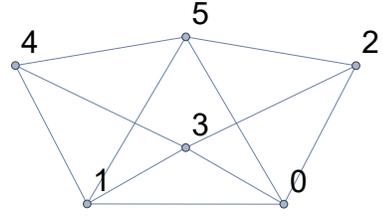}
    \caption{Graph with the "hat" structure.}
    \label{fig:6vertexes_hat_repeated}
\end{figure}

We will now discuss the hat graph shown in Fig.~\ref{fig:6vertexes_hat}, and repeated here in Fig.~\ref{fig:6vertexes_hat_repeated} for convenience.

% In the case in which Alice and Bob can't start in the same position, the optimal deterministic strategy, leading to the result stated in Tab.~\ref{Rend_2_Nany_all}, is the following: Alice starting from nodes $0$ or $3$ or $4$ or $5$ goes to node $1$; from node $1$ and goes to node $0$; from node $2$ goes to node $3$. Bob starting from nodes $0$ or $3$ or $4$ or $5$ goes to node $1$; from node $1$ goes to node $3$; from node $2$ goes to node $0$. Since we consider the case when agents start at different positions, we have $30$ possible initial locations of the parties. The parties win when Alice starts from nodes $\{0,3,4,5\}$ and Bob from nodes $\{0,3,4,5\}$, as then they meet at node $1$. This covers $12$ possible initial positions. The parties meet at node $0$ and win when Alice starts at node $1$ and Bob starts at node $2$. Another winning possibility is when parties meet at node $3$, what happens when Alice starts at node $2$ and Bob at node $1$. The success probability is in this case $\frac{14}{30} = \frac{7}{15}$.

In the case in which Alice and Bob can't start in the same position and without the symmetric strategy constraint, the optimal deterministic strategy, leading to the result stated in Tab.~\ref{Rend_2_Nany_all}, is the following: Alice starting from nodes $0$ or $3$ or $4$ or $5$ goes to node $1$; from node $1$ and $2$ goes to node $3$. The same strategy for Bob. This covers $12$ possible initial positions. Since we consider the case in which the agents start at different positions, we have $30$ possible initial locations. The parties win when Alice starts from nodes $\{0,3,4,5\}$ and Bob from nodes $\{0,3,4,5\}$, as then they meet at node $1$. This covers $12$ possible initial positions. The other winning possibilities are when they meet at node $3$, which happens in $2$ cases when Alice starts from node $1$ and Bob from node $2$ and vice versa. The average success probability in this case is $\frac{14}{30} = \frac{7}{15}$.

One of the quantum strategies giving the optimal value $0.5$ uses the entangled state $\frac{1}{\sqrt{2}} (\ket{00} + \ket{33})$ on Hilbert space $\mathbb{C}^4 \otimes \mathbb{C}^4$. For $x = 0$ the measurements of Alice are the following:
\begin{subequations}
	\begin{equation}
		M(0,0) = \proj{0} + 0.25 \proj{1} + 0.25 \proj{2},
	\end{equation}
	\begin{equation}
		M(1,0) = M(3,0) = 0.25 (\proj{1} + \proj{2}),
	\end{equation}
	\begin{equation}
		M(2,0) = 0.25 (\proj{1} + \proj{2}) + \proj{3}.
	\end{equation}
\end{subequations}
For $x = 1$ the measurements are:
\begin{subequations}
	\begin{equation}
		\begin{aligned}
			M(0,1) =& 0.75 \proj{0} + 0.25 (\proj{1} + \proj{2} + \proj{3}) \\ &+ \alpha (\kb{0}{3} + \kb{3}{0}),
		\end{aligned}
	\end{equation}
	\begin{equation}
		\begin{aligned}
			M(1,1) =& 0.25 (\proj{0} +\proj{1} + \proj{2}) + 0.75 \proj{3} \\ &- \alpha (\kb{0}{3} + \kb{3}{0}),
		\end{aligned}
	\end{equation}
 and
	\begin{equation}
		M(2,1) = M(3,1) = M(1,0).
	\end{equation}
\end{subequations}
Then, for $x = 2$ Alice uses the measurements:
\begin{subequations}
	\begin{equation}
		\begin{aligned}
			M(0,2) =& 0.75 \proj{0} + \frac{1}{3} (\proj{1} + \proj{2}) + 0.25 \proj{3} \\ &+ \alpha (\kb{0}{3} + \kb{3}{0}),
		\end{aligned}
	\end{equation}
	\begin{equation}
		\begin{aligned}
			M(1,2) =& 0.25 \proj{0} + \frac{1}{3} (\proj{1} + \proj{2}) + 0.75 \proj{3} \\ &- \alpha (\kb{0}{3} + \kb{3}{0}),
		\end{aligned}
	\end{equation}
	\begin{equation}
		M(2,2) = \frac{1}{3} (\proj{1} + \proj{2}),
	\end{equation}
\end{subequations}
and $M(3,2) = 0$. The measurements for $x = 3$ and for $x = 5$ are
\begin{subequations}
	\begin{equation}
		\begin{aligned}
			M(0,3) =& M(0,5) = 0.25 (\proj{0} + \proj{1} + \proj{2}) \\ & + 0.75 \proj{3}+ \alpha (\kb{0}{3} + \kb{3}{0}),
		\end{aligned}
	\end{equation}
	\begin{equation}
		\begin{aligned}
			M(1,3) =& M(1,5) = 0.75 \proj{0} + \\ &0.25 (\proj{1} + \proj{2} + \proj{3}) \\ &- \alpha (\kb{0}{3} + \kb{3}{0}),
		\end{aligned}
	\end{equation}
 and
	\begin{equation}
		M(2,3) = M(3,3) = M(2,5) = M(3,5) = M(1,0).
	\end{equation}
\end{subequations}
Finally, for $x = 4$ the measurement is given by
\begin{subequations}
	\begin{equation}
		M(0,4) = \proj{0} + \frac{1}{3} (\proj{1} + \proj{2}),
	\end{equation}
	\begin{equation}
		M(1,4) = M(2,4) = \frac{1}{3} (\proj{1} + \proj{2}) + \proj{3},
	\end{equation}
\end{subequations}
and $M(3,4) = 0$. 

The measurements of Bob are the same, and thus the quantum strategy is symmetric.

Let us now discuss some of the probabilities which occur in this case. For instance, let Alice start at node $0$ and Bob at node $1$. With probability $0.5$ Alice goes to node $1$ or $3$, and Bob with probability $0.5$ goes to node $0$ or $3$, but due to quantum steering~\cite{schrodinger1935discussion,wiseman2007steering,ramanathan2018steering,uola2020quantum} which is an effect of the entanglement, the probability of both parties arriving at node $3$  is $\frac{3}{8} > \frac{1}{2} \times \frac{1}{2}$. A similar situation happens for all other $23$ pairs of possible of settings, excluding the $6$ discussed below.

A different type of movement occurs in the following $6$ cases. The first two cases are when one of the parties starts at node $0$ and the other at node $4$. Then with probability $0.5$, they both move to node $1$, and with probability $0.5$ they both move to node $3$, so they win with probability $1$. Analogous action is performed when parties are at nodes $1$ and $2$ and the parties meet either at node $0$ or at node $3$. Similarly for starting pair of nodes $3$ and $5$ the parties meet either at $0$ or at $1$ with certainty.

The total winning probability is thus $\frac{3}{8} \times \frac{24}{30} + 1 \times \frac{6}{30} = 0.5$.

\subsubsection{The domination task on pentagon}
\label{ssec:pentagon}

Now, we consider the other of cooperation tasks, \textit{viz.} the domination.

In the case in which the agents can start from any position, the optimal classical strategy is shown in Fig.~\ref{fig:pentagon}. It can be noticed, that the strategy can be summarized as trying to concentrate one of the parties, as Alice, near nodes $1$ and $2$, and the other party near nodes $3$ and $4$. It can be calculated that the parties on average will dominate $4.6$ nodes.

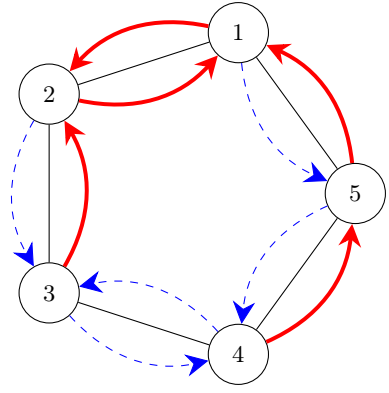
\begin{figure}
    \centering
    \begin{tikzpicture}[scale=1.5]
        % Nodes
        \foreach \x/\y in {1/72,2/144,3/216,4/288,5/0}
            \node[circle, draw, minimum size=8mm] (n\x) at (\y:1.5cm) {\x};
            
        % Edges
        \foreach \x/\y in {1/2,2/3,3/4,4/5,5/1}
            \draw (n\x) -- (n\y);
            
        % Red curved bold arrows
        \foreach \x/\y in {1/2,2/1,3/2,4/5,5/1}
            \draw[red, -{Stealth[width=3mm,length=3mm]}, line width=1.5pt] (n\x) to [bend right] (n\y);
            
        % Blue red curved dashed arrows
        \foreach \x/\y in {1/5,2/3,3/4,4/3,5/4}
            \draw[blue, -{Stealth[width=3mm,length=3mm]}, dashed] (n\x) to [bend right] (n\y);
    \end{tikzpicture}
    \caption{(color online) Pentagon with optimal classical strategies of Alice and Bob for domination. Red arrows denote moves of one of the parties, and blue arrows moves of the other party.}
    \label{fig:pentagon}
\end{figure}

One of the quantum strategies achieving the optimal value has the following properties:

First, consider the case in which Alice and Bob start in the same position, which happens in $5$ out of $25$ cases. In this case, we have that half of the times Alice moves in the so-called "clock-wise" direction~\cite{alpern1995rendezvous,flocchini1998sense,alpern2002rendezvous,pelc2012deterministic}, while Bob moves "counter-clock-wise", and in the other half of the times, Alice moves "counter-clock-wise" while Bob moves "clock-wise", dominating in both cases the $5$ nodes.

Now let us consider the case when the starting nodes of Alice and Bob are neighboring. This happens in $10$ out of $25$ possible initial position pairs.
In this case, with a probability about $p_1 \approx 0.3273$, they are moving in opposite directions, so that after their movement they will be separated by one node. Thus they will dominate all $5$ nodes; in Fig.~\ref{fig:pentagon_quantum} it is represented with green ultra-thin arrows.

With the same probability $p_1$, they will move towards each other, effectively exchanging their positions. They will dominate $4$ nodes; in Fig.~\ref{fig:pentagon_quantum} this move is represented with blue ultra-thick arrows.
With smaller probability $0.5 - p_1 \approx 0.1727$ both will move either "clock-wise" or "counter-clock-wise", and dominate $4$ nodes.

The remaining possible case is when the starting nodes of Alice and Bob are not neighboring. This happens in $10$ out of $25$ possible initial positions.
The quantum strategy implies that with probability $p_2 \approx 0.45224$, both parties move "clock-wise", and with the same probability they both move "counter-clock-wise", and thus after the movement they are still not neighboring. In that case, they dominate all $5$ nodes. In Fig.~\ref{fig:pentagon_quantum} this move is represented with yellow dashed ultra-thick arrows.

With smaller probability $0.5 - p_2 \approx 0.04773$, the first of the parties is moving "clock-wise", and the second "counter-clock-wise"; with the same probability the first is moving "counter-clock-wise" and the second "clock-wise". Depending on the exact starting node locations, the former situation will lead either to the moving of both parties to the same node,  dominating $3$ nodes, or moving to two neighboring nodes, dominating $4$ nodes; both situations happen with equal probability. In Fig.~\ref{fig:pentagon_quantum} this move is represented with red dashed and cyan dashed thick arrows.

Summarizing, the average number of dominated nodes is:
\begin{equation}
	\begin{aligned}
		&\frac{5}{25} \times 5+\frac{10}{25} \times \left[ p_1 \times 5 + (1-p_1) \times 4 \right] + \\ &\frac{10}{25} \times \left[ 2 p_2 \times 5 + (0.5 - p_2) \times 4 + (0.5 - p_2) \times 3 \right]  \\& \approx 4.6736.
	\end{aligned}
\end{equation}

\begin{figure}
    \centering
    \begin{tikzpicture}[->, >=Stealth, auto, node distance=3cm, main node/.style={circle, draw, minimum size=1.5em}, scale=1.5]
      % Nodes
      \foreach \x/\y in {1/72,2/144,3/216,4/288,5/0}
        \node[main node] (\x) at (\y:1.5cm) {\x};
    
      % Edges
      \draw[green, bend left, ultra thin] (1) to node [font=\tiny, pos=0.5, above] {$p_1$} (2);
      \draw[green, bend right, ultra thin] (5) to node [font=\tiny, pos=0.5, above] {$p_1$} (4);
      \draw[blue, bend left, ultra thick] (5) to node [font=\tiny, pos=0.5, above] {$p_1$} (4);
      \draw[blue, bend left, ultra thick] (1) to node [font=\tiny, pos=0.5, above] {$p_1$} (5);
      \draw[yellow, bend left, dashed, ultra thick] (2) to node [font=\tiny, pos=0.5, above] {$p_2$} (1);
      \draw[yellow, bend right, dashed, ultra thick] (4) to node [font=\tiny, pos=0.5, above] {$p_2$} (3);
      \draw[red, bend left, dashed] (2) to node [font=\tiny, pos=0.5, right] {$0.5 - p_2$} (3);
      \draw[red, bend left, dashed] (4) to node [font=\tiny, pos=0.5, above] {$0.5 - p_2$} (3);
      \draw[cyan, bend left, dashed, thick] (3) to node [font=\tiny, pos=0.5, left] {$0.5 - p_2$} (2);
      \draw[cyan, bend left, dashed, thick] (5) to node [font=\tiny, pos=0.5, left] {$0.5 - p_2$} (1);
    \end{tikzpicture}
    \caption{(color online) Pentagon with optimal quantum strategies of Alice and Bob for domination. Arrows denote the moves of the parties in different cases, and the labels refer to probabilities of the movement; see the text in sec.~\ref{ssec:pentagon} for the description.}
    \label{fig:pentagon_quantum}
\end{figure}
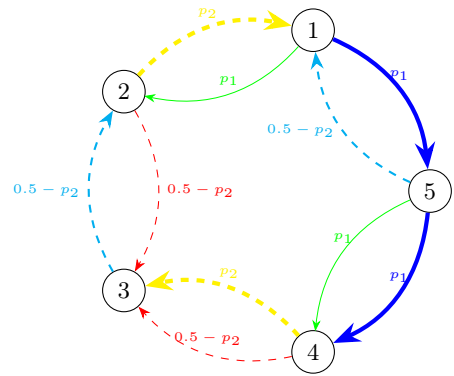

\section{Discussion}
\label{sec:discuss}

The two problems under scrutiny in this study, namely rendezvous on graphs and the introduced domination on graphs task for mobile agents, represent specific instances of a broader concept known as pattern formation~\cite{sugihara1996distributed,suzuki1999distributed,aggarwal1999hard,prencipe2000achievable}. Among the array of challenges in pattern formation research, the most akin task is dispersion on graphs~\cite{kshemkalyani2019efficient}, wherein the agents' objective is to efficiently spread throughout a defined territory, or graph exploration~\cite{dereniowski2015fast}. These problems are particularly significant as they find applications across a range of fields, encompassing robotics, networking, and distributed systems, and they grapple with fundamental issues pertaining to effective coordination and information exchange among mobile agents operating within a graph-based framework.

In the seminal paper preceding our current research~\cite{mironowicz2023entangled}, the focus was primarily on cubic graphs and cycles. Our present study extends this investigation to explore the broader applicability of the conclusions drawn from the advantage quantum entanglement offers in the context of rendezvous tasks for mobile agents. This work aims to establish the extent to which these findings can be generalized and the ubiquity of situations where quantum entanglement provides a strategic edge.

On the other hand, it is both surprising and somewhat disappointing that, on graphs where a quantum advantage was observed for two parties, we couldn't identify any gain for three agents. Consequently, investigating whether such an advantage is feasible, at least in certain scenarios, for more than two agents becomes a crucial avenue for further exploration. Uncovering such examples appears to be a non-trivial task. If it turns out to be unattainable, it would be valuable to discern and understand the underlying reasons for this limitation.

\section{Conclusions}
\label{sec:conclusions}

In our study, we introduced a novel distributed task for mobile agents, specifically the domination task, drawing inspiration from a classical concept in graph theory. Our exploration encompassed various scenarios across different types of graphs for both the domination task and another crucial multi-agent task, namely rendezvous. After obtaining some results about classical strategies, we examined cases involving two and three agents. Intriguingly, we consistently observed a quantum advantage for scenarios involving two parties, while no such advantage was found for three parties. This highlights the potential efficacy of the proposed method for tasks involving two agents and prompts further investigation into why it proves challenging to identify examples for three agents. This, along with the quest for general analytical formulae for various families of graphs, remains an open question for future explorations.

It is important to note that this approach offers the possibility to search for Bell inequalities associated with tasks with an arbitrary number of inputs and number of outputs associated with each input by building one of the suitable graphs and, after selecting a communication task, looking for gaps between the maximum success probabilities associated with classical and quantum strategies.

\section*{Acknowledgements}

We acknowledge partial financial support by the Foundation for Polish Science (IRAP project, ICTQT, contract No. 2018/MAB/5,
co-financed by EU within Smart Growth Operational Programme), from the Knut and Alice Wallenberg Foundation through the Wallenberg Centre for Quantum Technology (WACQT), and from NCBiR QUANTERA/2/2020 (www.quantera.eu) an ERA-Net cofund in Quantum Technologies under the project eDICT.
The numerical calculation we conducted using NCPOL2SDPA~\cite{wittek2015algorithm,Ncpol2sdpaGitHub}, and MOSEK Solver~\cite{andersen2000mosek}. P.M. acknowledge discussions with Jorge Quintanilla Tizon, Paul Strange, and Joshua Tucker from University of Kent.

\bibliographystyle{ieeetr}
\bibliography{rendezvous_domination_refs}

@Article{pirandola2020advances,
  author    = {Pirandola, Stefano and Andersen, Ulrik L. and Banchi, Leonardo and Berta, Mario and Bunandar, Darius and Colbeck, Roger and Englund, Dirk and Gehring, Tobias and Lupo, Cosmo and Ottaviani, Carlo and Pereira, J. L. and Razavi, M. and Shamsul Shaari, J. and Tomamichel, M. and Usenko, V. C. and Vallone, G. and Villoresi, P. and Wallden, P.},
  journal   = {Advances in optics and photonics},
  title     = {Advances in quantum cryptography},
  year      = {2020},
  number    = {4},
  pages     = {1012--1236},
  volume    = {12},
  publisher = {Optica Publishing Group},
}

@Article{horodecki2009quantum,
  author    = {Horodecki, Ryszard and Horodecki, Pawe{\l} and Horodecki, Micha{\l} and Horodecki, Karol},
  journal   = {Reviews of Modern Physics},
  title     = {Quantum entanglement},
  year      = {2009},
  number    = {2},
  pages     = {865},
  volume    = {81},
  publisher = {APS},
}

@article{alpern1995rendezvous,
  title={The rendezvous search problem},
  author={Alpern, Steve},
  journal={SIAM Journal on Control and Optimization},
  volume={33},
  number={3},
  pages={673--683},
  year={1995},
  publisher={SIAM}
}

@article{cao2012overview,
  title={An overview of recent progress in the study of distributed multi-agent coordination},
  author={Cao, Yongcan and Yu, Wenwu and Ren, Wei and Chen, Guanrong},
  journal={IEEE Transactions on Industrial informatics},
  volume={9},
  number={1},
  pages={427--438},
  year={2012},
  publisher={IEEE}
}

@Article{pironio2010random,
  author    = {Pironio, Stefano and Ac{\'\i}n, Antonio and Massar, Serge and de La Giroday, A. Boyer and Matsukevich, Dzmitry N. and Maunz, Peter and Olmschenk, Steven and Hayes, David and Luo, Le and Manning, T. Andrew},
  journal   = {Nature},
  title     = {Random numbers certified by {B}ell’s theorem},
  year      = {2010},
  number    = {7291},
  pages     = {1021--1024},
  volume    = {464},
  publisher = {Nature Publishing Group UK London},
}

@Article{bell1964einstein,
  author    = {Bell, John S.},
  journal   = {Physics},
  title     = {On the {E}instein {P}odolsky {R}osen paradox},
  year      = {1964},
  number    = {3},
  pages     = {195},
  volume    = {1},
  doi       = {https://doi.org/10.1103/PhysicsPhysiqueFizika.1.195},
  publisher = {APS},
}

@article{alpern2002rendezvous,
  title={Rendezvous search: A personal perspective},
  author={Alpern, Steve},
  journal={Operations Research},
  volume={50},
  number={5},
  pages={772--795},
  year={2002},
  publisher={INFORMS}
}

@Article{alpern2010rendezvous,
  author    = {Alpern, Steven},
  title     = {Rendezvous games (non-antagonistic search games)},
  year      = {2010},
  publisher = {John Wiley \& Sons},
}

@article{pelc2012deterministic,
  title={Deterministic rendezvous in networks: A comprehensive survey},
  author={Pelc, Andrzej},
  journal={Networks},
  volume={59},
  number={3},
  pages={331--347},
  year={2012},
  publisher={Wiley Online Library}
}

@Article{allan1978domination,
  author    = {Allan, Robert B. and Laskar, Renu},
  journal   = {Discrete mathematics},
  title     = {On domination and independent domination numbers of a graph},
  year      = {1978},
  number    = {2},
  pages     = {73--76},
  volume    = {23},
  publisher = {Elsevier},
}

@Article{tavakoli2023semidefinite,
  author  = {Tavakoli, Armin and Pozas-Kerstjens, Alejandro and Brown, Peter and Ara{\'u}jo, Mateus},
  journal = {arXiv:2307.02551},
  title   = {Semidefinite programming relaxations for quantum correlations},
  year    = {2023},
}

@Article{mironowicz2023semi,
  author  = {Mironowicz, Piotr},
  journal = {Journal of Physics A: Mathematical and Theoretical},
  title   = {Semi-definite programming and quantum information},
  year    = {2024},
  doi     = {10.1088/1751-8121/ad2b85},
  url     = {http://iopscience.iop.org/article/10.1088/1751-8121/ad2b85},
}

@Book{Skrzypczyk2023,
  author    = {Skrzypczyk, Paul and Cavalcanti, Daniel},
  publisher = {IOP Publishing},
  title     = {{Semidefinite Programming in Quantum Information Science}},
  year      = {2023},
  isbn      = {978-0-7503-3343-6},
  series    = {2053-2563},
  abstract  = {Semidefinite programs (SDPs) are a class of optimisation problems that find application in numerous areas of physics, engineering and mathematics. Semidefinite programming is particularly suited to problems in quantum physics and quantum information science. Following a review of the theory of semidefinite programming, the book proceeds to describe how it can be used to address a wide range of important problems from across quantum information science. Specific applications include quantum state, measurement, and channel estimation and discrimination, entanglement detection and quantification, quantum distance measures, and measurement incompatibility. Though SDPs have become an increasingly important tool in quantum information science it’s not yet the kind of mathematics students learn routinely. Assuming only a basic knowledge of linear algebra and quantum physics and quantum information, this graduate-level book provides a unified and accessible presentation of one of the key numerical methods used in quantum information science. Whilst the focus is on the theoretical machinery of SDPs, the authors have provided an accompanying GitHub repository containing example code, covering some of the SDPs studied in this book. Part of IOP Series in Quantum Technology.Key features• Accessible for graduate students in science and mathematics.• A unified and accessible presentation of one of the key numerical methods used in quantum information science.• Written by leading researchers on the topic.• Accompanying GitHub repository with sample code.},
  doi       = {10.1088/978-0-7503-3343-6},
  url       = {https://dx.doi.org/10.1088/978-0-7503-3343-6},
}

@Article{pal2010maximal,
  author    = {P{\'a}l, K{\'a}roly F. and V{\'e}rtesi, Tam{\'a}s},
  journal   = {Physical Review A},
  title     = {Maximal violation of a bipartite three-setting, two-outcome {B}ell inequality using infinite-dimensional quantum systems},
  year      = {2010},
  number    = {2},
  pages     = {022116},
  volume    = {82},
  publisher = {APS},
}

@article{navascues2008convergent,
  title={A convergent hierarchy of semidefinite programs characterizing the set of quantum correlations},
  author={Navascu{\'e}s, Miguel and Pironio, Stefano and Ac{\'\i}n, Antonio},
  journal={New Journal of Physics},
  volume={10},
  number={7},
  pages={073013},
  year={2008},
  publisher={IOP Publishing}
}

@InProceedings{kshemkalyani2019efficient,
  author    = {Kshemkalyani, Ajay D. and Ali, Faizan},
  booktitle = {Proceedings of the 20th International Conference on Distributed Computing and Networking},
  title     = {Efficient dispersion of mobile robots on graphs},
  year      = {2019},
  pages     = {218--227},
}

@article{dereniowski2015fast,
  title={Fast collaborative graph exploration},
  author={Dereniowski, Dariusz and Disser, Yann and Kosowski, Adrian and Paj{\k{a}}k, Dominik and Uzna{\'n}ski, Przemys{\l}aw},
  journal={Information and Computation},
  volume={243},
  pages={37--49},
  year={2015},
  publisher={Elsevier}
}

@article{sugihara1996distributed,
  title={Distributed algorithms for formation of geometric patterns with many mobile robots},
  author={Sugihara, Kazuo and Suzuki, Ichiro},
  journal={Journal of robotic systems},
  volume={13},
  number={3},
  pages={127--139},
  year={1996},
  publisher={Wiley Online Library}
}

@article{suzuki1999distributed,
  title={Distributed anonymous mobile robots: Formation of geometric patterns},
  author={Suzuki, Ichiro and Yamashita, Masafumi},
  journal={SIAM Journal on Computing},
  volume={28},
  number={4},
  pages={1347--1363},
  year={1999},
  publisher={SIAM}
}

@misc{prencipe2000achievable,
  title={Achievable patterns by an even number of autonomous mobile robots},
  author={Prencipe, Giuseppe},
  year={2000},
  publisher={Universit{\`a} di Pisa}
}

@InProceedings{aggarwal1999hard,
  author       = {Aggarwal, Alok and Rangan, C. Pandu and Flocchini, Paola and Prencipe, Giuseppe and Santoro, Nicola and Widmayer, Peter},
  booktitle    = {Algorithms and Computation: 10th International Symposium, ISAAC’99 Chennai, India, December 16--18, 1999 Proceedings 10},
  title        = {Hard tasks for weak robots: The role of common knowledge in pattern formation by autonomous mobile robots},
  year         = {1999},
  organization = {Springer},
  pages        = {93--102},
}

@Article{mironowicz2023entangled,
  author    = {Mironowicz, Piotr},
  journal   = {New Journal of Physics},
  title     = {Entangled rendezvous: a possible application of {B}ell non-locality for mobile agents on networks},
  year      = {2023},
  number    = {1},
  pages     = {013023},
  volume    = {25},
  publisher = {IOP Publishing},
}

@article{vandenberghe1996semidefinite,
  title={Semidefinite programming},
  author={Vandenberghe, Lieven and Boyd, Stephen},
  journal={SIAM review},
  volume={38},
  number={1},
  pages={49--95},
  year={1996},
  publisher={SIAM}
}

@InCollection{andersen2000mosek,
  author    = {Andersen, Erling D. and Andersen, Knud D.},
  booktitle = {High performance optimization},
  publisher = {Springer},
  title     = {The {MOSEK} interior point optimizer for linear programming: an implementation of the homogeneous algorithm},
  year      = {2000},
  pages     = {197--232},
}

@Article{wittek2015algorithm,
  author    = {Wittek, Peter},
  journal   = {ACM Transactions on Mathematical Software (TOMS)},
  title     = {Algorithm 950: Ncpol2sdpa—sparse semidefinite programming relaxations for polynomial optimization problems of noncommuting variables},
  year      = {2015},
  number    = {3},
  pages     = {1--12},
  volume    = {41},
  doi       = {10.1145/2699464},
  publisher = {ACM New York, NY, USA},
}

@Misc{Ncpol2sdpaGitHub,
  author       = {Wittek, Peter and Brown, Peter J.},
  howpublished = {\url{https://github.com/peterjbrown519/ncpol2sdpa}},
  title        = {Updating ncpol2sdpa after {Peter Wittek}},
  journal      = {GitHub repository},
  key          = {GitHub Repository},
  url          = {https://github.com/peterjbrown519/ncpol2sdpa},
}

@Article{potra2000interior,
  author    = {Potra, Florian A. and Wright, Stephen J.},
  journal   = {Journal of computational and applied mathematics},
  title     = {Interior-point methods},
  year      = {2000},
  number    = {1-2},
  pages     = {281--302},
  volume    = {124},
  publisher = {Elsevier},
}

@article{navascues2007bounding,
  title={Bounding the set of quantum correlations},
  author={Navascu{\'e}s, Miguel and Pironio, Stefano and Ac{\'\i}n, Antonio},
  journal={Physical Review Letters},
  volume={98},
  number={1},
  pages={010401},
  year={2007},
  publisher={APS}
}

@Article{Navascues2015a,
  author    = {Navascu{\'e}s, Miguel and Guryanova, Yelena and Hoban, Matty J. and Ac{\'i}n, Antonio},
  journal   = {Nature Communications},
  title     = {Almost quantum correlations},
  year      = {2015},
  month     = {Feb},
  note      = {Article},
  pages     = {6288 EP -},
  volume    = {6},
  day       = {20},
  publisher = {Nature Publishing Group, a division of Macmillan Publishers Limited. All Rights Reserved. SN -},
  url       = {http://dx.doi.org/10.1038/ncomms7288},
}

@Article{goddard2013independent,
  author    = {Goddard, Wayne and Henning, Michael A.},
  journal   = {Discrete Mathematics},
  title     = {Independent domination in graphs: A survey and recent results},
  year      = {2013},
  number    = {7},
  pages     = {839--854},
  volume    = {313},
  publisher = {Elsevier},
}

@Book{Berge1962,
  author    = {Berge, C.},
  publisher = {Methuen},
  title     = {{The Theory of Graphs and its Applications}},
  year      = {1962},
  isbn      = {9780416642902},
}

@InProceedings{ore1962theory,
  author       = {Ore, Oystein},
  booktitle    = {Colloquium Publications},
  title        = {Theory of graphs},
  year         = {1962},
  organization = {American Mathematical Society},
  series       = {American Mathematical Society Translations, series 2},
}

@Article{cockayne1980total,
  author    = {Cockayne, Ernest J. and Dawes, R. M. and Hedetniemi, Stephen T.},
  journal   = {Networks},
  title     = {Total domination in graphs},
  year      = {1980},
  number    = {3},
  pages     = {211--219},
  volume    = {10},
  publisher = {Wiley Online Library},
}

@Book{henning2013total,
  author    = {Henning, Michael A. and Yeo, Anders},
  publisher = {Springer},
  title     = {Total domination in graphs},
  year      = {2013},
}

@Book{haynes2013fundamentals,
  author    = {Haynes, Teresa W. and Hedetniemi, Stephen and Slater, Peter},
  publisher = {CRC press},
  title     = {Fundamentals of domination in graphs},
  year      = {2013},
}

@Book{haynes2023domination,
  author    = {Haynes, Teresa W. and Hedetniemi, Stephen T. and Henning, Michael A.},
  publisher = {Springer},
  title     = {Domination in graphs: Core concepts},
  year      = {2023},
}

@Article{sampathkumar1979connected,
  author  = {Sampathkumar, E. and Walikar, H. B.},
  journal = {J. Math. Phys. Sci.},
  title   = {The connected domination number of a graph},
  year    = {1979},
  pages   = {607-613},
  volume  = {13},
}

@inproceedings{yu1996agent,
  title={Agent rendezvous: A dynamic symmetry-breaking problem},
  author={Yu, Xiangdong and Yung, Moti},
  booktitle={International Colloquium on Automata, Languages, and Programming},
  pages={610--621},
  year={1996},
  organization={Springer}
}

@Article{han2008improved,
  author    = {Han, Qiaoming and Du, Donglei and Vera, Juan and Zuluaga, Luis F.},
  journal   = {Operations Research},
  title     = {Improved bounds for the symmetric rendezvous value on the line},
  year      = {2008},
  number    = {3},
  pages     = {772--782},
  volume    = {56},
  publisher = {INFORMS},
}

@Article{alpern1998symmetric,
  author    = {Alpern, Steve and Lim, Wei Shi},
  journal   = {SIAM Journal on Control and Optimization},
  title     = {The symmetric rendezvous-evasion game},
  year      = {1998},
  number    = {3},
  pages     = {948--959},
  volume    = {36},
  publisher = {SIAM},
}

@Article{anderson1995rendezvous,
  author    = {Anderson, Edward J. and Essegaier, Skander},
  journal   = {SIAM Journal on Control and Optimization},
  title     = {Rendezvous search on the line with indistinguishable players},
  year      = {1995},
  number    = {6},
  pages     = {1637--1642},
  volume    = {33},
  publisher = {SIAM},
}

@InProceedings{anderson1998asymmetric,
  author    = {Anderson, Edward J. and Fekete, Sandor P.},
  booktitle = {Proceedings of the fourteenth annual symposium on Computational geometry},
  title     = {Asymmetric rendezvous on the plane},
  year      = {1998},
  pages     = {365--373},
}

@article{alpern1999asymmetric,
  title={Asymmetric rendezvous on the line is a double linear search problem},
  author={Alpern, Steve and Beck, Anatole},
  journal={Mathematics of Operations Research},
  volume={24},
  number={3},
  pages={604--618},
  year={1999},
  publisher={Informs}
}

@article{alpern2000asymmetric,
  title={Asymmetric rendezvous search on the circle},
  author={Alpern, Steve},
  journal={Dynamics and control},
  volume={10},
  pages={33--45},
  year={2000},
  publisher={Springer}
}

@article{alpern2000pure,
  title={Pure strategy asymmetric rendezvous on the line with an unknown initial distance},
  author={Alpern, Steve and Beck, Anatole},
  journal={Operations Research},
  volume={48},
  number={3},
  pages={498--501},
  year={2000},
  publisher={INFORMS}
}

@Article{dias2021swarm,
  author    = {Dias, Pollyanna G. Faria and Silva, Mateus C. and Rocha Filho, Geraldo P. and Vargas, Patricia A. and Cota, Luciano P. and Pessin, Gustavo},
  journal   = {Sensors},
  title     = {Swarm robotics: A perspective on the latest reviewed concepts and applications},
  year      = {2021},
  number    = {6},
  pages     = {2062},
  volume    = {21},
  publisher = {MDPI},
}

@Book{nielsen2010quantum,
  author    = {Nielsen, Michael A. and Chuang, Isaac L.},
  publisher = {Cambridge University Press},
  title     = {Quantum computation and quantum information},
  year      = {2010},
}

@Article{brunner2014bell,
  author    = {Brunner, Nicolas and Cavalcanti, Daniel and Pironio, Stefano and Scarani, Valerio and Wehner, Stephanie},
  journal   = {Reviews of Modern Physics},
  title     = {Bell nonlocality},
  year      = {2014},
  number    = {2},
  pages     = {419},
  volume    = {86},
  publisher = {APS},
}

@Article{cirel1980quantum,
  author    = {Cirel'son, Boris S.},
  journal   = {Letters in Mathematical Physics},
  title     = {Quantum generalizations of {B}ell's inequality},
  year      = {1980},
  pages     = {93--100},
  volume    = {4},
  publisher = {Springer},
}

@article{muhammad2014quantum,
  title={Quantum bidding in bridge},
  author={Muhammad, Sadiq and Tavakoli, Armin and Kurant, Maciej and Paw{\l}owski, Marcin and {\.Z}ukowski, Marek and Bourennane, Mohamed},
  journal={Physical Review X},
  volume={4},
  number={2},
  pages={021047},
  year={2014},
  publisher={APS}
}

@inproceedings{cleve2004consequences,
  title={Consequences and limits of nonlocal strategies},
  author={Cleve, Richard and Hoyer, Peter and Toner, Benjamin and Watrous, John},
  booktitle={Proceedings. 19th IEEE Annual Conference on Computational Complexity, 2004.},
  pages={236--249},
  year={2004},
  organization={IEEE}
}

@Article{regev2015quantum,
  author    = {Regev, Oded and Vidick, Thomas},
  journal   = {ACM Transactions on Computation Theory (ToCT)},
  title     = {Quantum {XOR} games},
  year      = {2015},
  number    = {4},
  pages     = {1--43},
  volume    = {7},
  publisher = {ACM New York, NY, USA},
}

@Article{ramanathan2016generalized,
  author    = {Ramanathan, Ravishankar and Augusiak, Remigiusz and Murta, Gl{\'a}ucia},
  journal   = {Physical Review A},
  title     = {Generalized {XOR} games with d outcomes and the task of nonlocal computation},
  year      = {2016},
  number    = {2},
  pages     = {022333},
  volume    = {93},
  publisher = {APS},
}

@PhdThesis{russo2017extended,
  author = {Russo, Vincent},
  school = {University of Waterloo},
  title  = {Extended nonlocal games},
  year   = {2017},
  doi    = {https://doi.org/10.48550/arXiv.1704.07375},
}

@Article{russoAndWatrous2017extended,
  author  = {Russo, Vincent and Watrous, John},
  journal = {preprint arXiv:1709.01837},
  title   = {Extended nonlocal games from quantum-classical games},
  year    = {2017},
}

@article{luo2018nonlocality,
  title={Nonlocality of all quantum networks},
  author={Luo, Ming-Xing},
  journal={Physical Review A},
  volume={98},
  number={4},
  pages={042317},
  year={2018},
  publisher={APS}
}

@article{luo2019nonlocal,
  title={A nonlocal game for witnessing quantum networks},
  author={Luo, Ming-Xing},
  journal={npj Quantum Information},
  volume={5},
  number={1},
  pages={91},
  year={2019},
  publisher={Nature Publishing Group UK London}
}

@PhdThesis{mironowicz2018applications,
  author = {Mironowicz, Piotr},
  school = {Gda\'{n}sk University of Technology},
  title  = {Applications of semi-definite optimization in quantum information protocols},
  year   = {2015},
  doi    = {https://doi.org/10.48550/arXiv.1810.05145},
}

@Article{ramanathan2018steering,
  author    = {Ramanathan, Ravishankar and Goyeneche, Dardo and Muhammad, Sadiq and Mironowicz, Piotr and Gr{\"u}nfeld, Marcus and Bourennane, Mohamed and Horodecki, Pawe{\l}},
  journal   = {Nature Communications},
  title     = {Steering is an essential feature of non-locality in quantum theory},
  year      = {2018},
  number    = {1},
  pages     = {4244},
  volume    = {9},
  publisher = {Nature Publishing Group UK London},
}

@article{wiseman2007steering,
  title={Steering, entanglement, nonlocality, and the Einstein-Podolsky-Rosen paradox},
  author={Wiseman, Howard M and Jones, Steve James and Doherty, Andrew C},
  journal={Physical review letters},
  volume={98},
  number={14},
  pages={140402},
  year={2007},
  publisher={APS}
}

@inproceedings{schrodinger1935discussion,
  title={Discussion of probability relations between separated systems},
  author={Schr{\"o}dinger, Erwin},
  booktitle={Mathematical Proceedings of the Cambridge Philosophical Society},
  volume={31},
  number={4},
  pages={555--563},
  year={1935},
  organization={Cambridge University Press}
}

@Article{uola2020quantum,
  author    = {Uola, Roope and Costa, Ana C. S. and Nguyen, H. Chau and G{\"u}hne, Otfried},
  journal   = {Reviews of Modern Physics},
  title     = {Quantum steering},
  year      = {2020},
  number    = {1},
  pages     = {015001},
  volume    = {92},
  publisher = {APS},
}

@Article{holder1889uber,
  author  = {Holder, Otto},
  journal = {Nachr. Akad. Wiss. Gottingen Math.-Phys. Kl.},
  title   = {Uber einen {M}ittelwertssatz},
  year    = {1889},
}

@article{flocchini1998sense,
  title={Sense of direction: Definitions, properties, and classes},
  author={Flocchini, Paola and Mans, Bernard and Santoro, Nicola},
  journal={Networks: An International Journal},
  volume={32},
  number={3},
  pages={165--180},
  year={1998},
  publisher={Wiley Online Library}
}

%\begin{appendix}
%
%\section{Best classical strategy for one agent given the strategies of the others: }
%Given the Randezvous problem* and the classical strategies $\{p_i(a|x)\}_{a\in N,x \in N}$, for $i\in \{1,..,r-1\}$, the best strategy for the $r$-th agent is achieved by the deterministic strategy built in the following way:
%\begin{enumerate}
%	\item From the strategies $\{p_i(a|x)\}_{a\in N,x \in N}$ we can build the list of probability that the agents $i$ with $i\in \{1,..,r-1\}$ meet in any given node, $\{ \tilde{p}(a)=\frac{1}{n^r}\prod_{i=1} ^{r-1} \sum_{x \in N} p_i(a|x) \}_{a\in N}$
%	\item Calling $j_k$ the node s.t $\tilde{p}(j_k)=max\{\tilde{p}(a)\}_{a\in N, a \neq j_1,..j_k-1 }$, the $r$-th agent assigns probability one to go to the node $j_1$ to each of the starting nodes which can reach the node $j_1$, calling the set of those nodes $N_1$, then they assign probability one to reach the node $j_k$ from all the nodes in $N_k$, where, for $k\in\{2,..,N\}$, $N_k$ is defined as the set of nodes which are connected with one edge with the node $j_k$ and which are not in $N_1 \cup ... \cup N_{k-1}$.
%\end{enumerate}
%
%*A probabilistic strategy can also be optimal when there exists $j_{h_1}$ and $j_{h_2}$, with $h_1\neq h_2 $ for which $\tilde{p}(h_1)=\tilde{p}(h_2)$ and exists a node $j_{h_3}$ which can reach both of them, for example, by assigning a probability $p$ to reach the node $j_{h_1}$ and $1-p$ to reach the node $j_{h_2} $ when starting from the node $h_3$.
%
%\end{appendix}

\end{document}